\definecolor{orcidlogocol}{HTML}{A6CE39}
\pgfplotsset{compat=1.14}
\newcommand{\tnorm}[1]{\big \| #1 \big \|}
\newcommand{\mathdefault}[1][]{}
\newtheorem{namedtheorem}{Proposition}
\newtheorem{proposition}{Proposition}
\newtheorem{lemma}{Lemma}
\newcommand{\blue}[1]{{#1}}
\renewcommand{\selectlanguage}[1]{}
\begin{document}

\preprint{APS/123-QED}

\title{\textbf{Quantum Dissipative Search via Lindbladians}}

\author{Peter J. Eder\textsuperscript{\orcidlink{0009-0006-3244-875X}}}
\email{peter-josef.eder@siemens.com}
\affiliation{Technical University of Munich, School of CIT, Department of Computer Science, Garching, Germany}
\affiliation{Siemens AG, Munich, Germany}

\author{Jernej Rudi Fin\v{z}gar\textsuperscript{\orcidlink{0000-0003-4393-827X}}}
\email{jernej-rudi.finzgar@tum.de}
\affiliation{Technical University of Munich, School of CIT, Department of Computer Science, Garching, Germany}
\affiliation{BMW AG, Munich, Germany}

\author{Sarah Braun\textsuperscript{\orcidlink{0000-0002-7032-6116}}}
\email{sarah.braun@siemens.com}
\affiliation{Siemens AG, Munich, Germany}

\author{Christian B. Mendl\textsuperscript{\orcidlink{0000-0002-6386-0230}}}
\email{christian.mendl@tum.de}
\affiliation{Technical University of Munich, School of CIT, Department of Computer Science, Garching, Germany}
\affiliation{Technical University of Munich, Institute for Advanced Study, Garching, Germany}

\date{\today}

\begin{abstract}
Closed quantum systems follow a unitary time evolution that can be simulated on quantum computers. By incorporating non-unitary effects via, e.g., measurements on ancilla qubits, these algorithms can be extended to open-system dynamics, such as Markovian processes described by the Lindblad master equation. In this paper, we analyze the convergence criteria and speed of a Markovian, purely dissipative quantum random walk on an unstructured classical search space. Notably, we show that certain jump operators make the quantum process replicate a classical one, while others yield differences between open quantum (OQRW) and classical random walks. We also clarify a previously observed quadratic speedup, demonstrating that OQRWs are no more efficient than classical search. Finally, we analyze a dissipative discrete-time ground-state preparation algorithm with a lower implementation cost. This allows us to interpolate between the dissipative and the unitary domain and thereby illustrate the important role of coherence for the quadratic speedup.
\end{abstract}

\keywords{Lindblad Equation, Dissipation, Ground-State Preparation, Grover Search}
\maketitle

\section{Introduction}

\begin{sloppypar}
There has been rising interest in the simulation of Markovian (i.e., memoryless) open-system dynamics following the Lindblad Master equation (LME)~\cite{Gorini1976, Lindblad1976, Manzano2020} on quantum computers in recent years~\cite{Chen:2023cuc, ding2024simulating, chen2023efficient, cleve_et_al:LIPIcs.ICALP.2017.17, li2023simulating}. Potential applications include state preparation~\cite{Kraus2008, Verstraete2009, Kastoryano2016}, the simulation of realistic quantum systems~\cite{PhysRevD.106.054508, Tornow2022, Rall_2023, Dorner2012, RevModPhys.59.1, May2011}, or the study of noise in quantum computing devices~\cite{Temme2014, Shenvi2003}. Ref.~\cite{Ding2023-rv} focuses on an early fault-tolerant implementation for preparing ground states via Lindbladians. Unlike previous methods, the scheme does not rely on cumbersome block encodings or precise control circuits, and requires only a single ancilla qubit, making it suitable for the early fault-tolerant regime with constrained quantum resources. A major difference compared to alternative approaches, such as employing Quantum Singular Value Transform~\cite{PRXQuantum.3.040305} (QSVT), is that it does not assume an easy-to-prepare initial state $\rho_0$ with overlap ${p_g = \bra{\Psi_g}\rho_0\ket{\Psi_g}= \Omega\left(1/\mathrm{poly}(n)\right)}$ with the ground state~$\ket{\Psi_g}$, where $n$ denotes the number of qubits in the system. Typically, ground-state preparation algorithms rely on assumptions such as a substantial initial overlap to justify their efficiency. From the perspective of complexity theory, preparing the ground state of a few-body Hamiltonian is QMA-hard~\cite{aharonov2002quantum, Nielsen2012}.

Given a Hamiltonian written in its eigenbasis ${H = \sum_{i=0}^N \lambda_i \ket{i}\bra{i}}$ and the density matrix $\rho$ of the system, the LME can be written as:
\begin{equation}
\label{eqn:original_lindblad_equation}
    \frac{\textrm{d}\rho}{\textrm{d}t} = \underbrace{-i[H,\rho]}_{\mathcal{L}_H[\rho]} + \underbrace{\sum_{j} \left(L_j \rho L_j^\dag - \frac{1}{2} \left\{L_j^\dag L_j, \rho\right\}\right)}_{\mathcal{L}_L[\rho]} =: \mathcal{L}[\rho].
\end{equation}
The operators $L_j$, often referred to as jump operators, represent the channels through which the system interacts with the environment~\cite{Breuer2007}. According to Ref.~\cite{lindblad_superoperator}, the equation can be rewritten in the Fock-Liouville space using a matrix of dimension $N^2 \times N^2$ ($N=2^n$), which is referred to as Liouvillian or Lindbladian, consisting of a unitary and a dissipative part, $\mathcal{L} := \mathcal{L}_H + \mathcal{L}_L$. Intuitively, the LME~\ref{eqn:original_lindblad_equation} performs a continuous-time quantum random walk (QRW) on the energy spectrum of $H$~\cite{Chen:2023cuc}.

Previous work studied the dynamics of discrete-time open quantum random walks (OQRWs) on a one-dimensional chain, wherein the quantum walker is driven exclusively by the interaction with the environment~\cite{Attal2012, Sinayskiy2013}. The formalism is built on completely positive and trace preserving (CPTP) maps, such as time evolution governed by the LME. Ref.~\cite{Attal2012} shows that OQRWs can be recovered from \blue{unitary quantum random walks (UQRWs)} if the transition operators satisfy an extended normalization condition by working in an enlarged Hilbert space, which is eventually traced out. Conversely, obtaining UQRWs from OQRWs appears feasible only if the decoherence operations are excluded. In contrast to the unusual behavior of UQRWs, it was found that OQRWs converge to a Gaussian distribution under certain conditions~\cite{Attal2014, Konno2012}. Ref.~\cite{Attal2012} further demonstrates that OQRWs include the classical random walk (CRW) as a special case. Lastly, Ref.~\cite{Pellegrini2014} derives a natural continuous-time extension of OQRWs~\cite{PhysRevA.75.022103}.

Few analyses exist on purely dissipative Markovian dynamics in the context of the Grover problem. Ref.~\cite{Allahverdyan2022} investigates the classical version of the dissipative part of the LME, which is also known as the rate equation. The authors demonstrate that a shift in the energy levels of the eigenstates of the Grover Hamiltonian that is not larger than the spectral gap leads to convergence to the ground state in ${\mathcal{O}(\log N)}$ steps for bounded long-range transitions. We will show later that a classical greedy algorithm can achieve an equivalent complexity. Note that the transition regimes such as long- and short-range, which are depicted in Figure~\ref{fig:transition_regimes}, control the connectivity of the random walk. Ref.~\cite{Vogl_2010} analyzes the purely dissipative component $\mathcal{L}_L$ of the quantum LME and derives the convergence speed to the ground state for long-range transitions, finding a quadratic speedup compared to the classical rate equation. However, this speedup still results in a complexity of $\mathcal{O}(N)$. The authors argue that this is due to the stringent restrictions on the transition rates. We will clarify this apparent speedup by demonstrating the transition from the LME to the rate equation. Lastly, Ref.~\cite{Ding2023-rv}, although not specifically addressing the Grover problem, provides an ansatz for ground-state preparation via Lindbladians and proves that the mixing time is polynomial in the system size $n$ under certain assumptions.

In this paper, we partition these ansätze into different coupling regimes that control the random walk, and highlight the parallels and differences between the classical and dissipative approaches. Furthermore, we interpret the results in the context of an actual implementation on a digital quantum device to properly evaluate the resource requirements of the algorithm. Finally, we analyze a dissipative discrete-time ground-state preparation algorithm with a lower implementation cost, which deviates from the LME but still provably prepares the ground state provided that the ansatz is suitable. This analysis reveals how certain coupling regimes maintain Lindblad dynamics, while others differ. It further allows us to interpolate between the dissipative and the unitary domain and thereby illustrate the critical role of coherence for the quadratic Grover speedup, a topic that has been explored extensively in previous studies~\cite{Sun2024, Rastegin2018_1, Rastegin2018_2}. Our contributions are as follows:
\begin{itemize}
    \item We examine the aforementioned early fault-tolerant algorithm for ground-state preparation via Lindbladians~\cite{Ding2023-rv} in detail, offering a clear and intuitive exposition (Section~\ref{sct:Single-Ancilla Ground-State Preparation via Lindbladians} and Appendix~\ref{sct:Single-Ancilla Ground-State Preparation via Lindbladians_Appendix}). Going beyond existing work, we refine the complexity of implementation for the Grover problem (Proposition~\ref{prop:Linear_scaling}) and outline the advantages of including the unitary dynamics via $\mathcal{L}_H$ (Appendix~\ref{sct:Influence of the Hamiltonian}).
    \item By closely aligning with an actual implementation, we can accurately interpret our findings and provide strong evidence that Markovian open-system dynamics for finding marked elements in an unstructured classical search space are no more effective than classical search. In doing so, we clarify a previously reported quadratic speedup~\cite{Vogl_2010} (Section~\ref{sct:Results for Grover's Search}).
    \item We further present an ansatz for which the continuous-time OQRW replicates a CRW, and show that in two other scenarios, the dynamics differ (Section~\ref{sct:Results for Grover's Search}). Finally, we highlight the essential role of coherence in achieving the quadratic speedup (Section~\ref{sct:Discrete-Time Random Walk}) by examining the discrete-time ground-state preparation algorithm from \mbox{Ref.~\cite[Appendix B.5]{Ding2023-rv}}, with an improved implementation cost, and identify a pathway to transition to the UQRW (Section~\ref{sct:Discrete-Time Random Walk}).
\end{itemize}

In the following, we denote the spectral norm of a matrix $A$, which is equal to its maximum singular value, as $\norm{A}$. It is also referred to as operator norm. Additionally, the trace norm of $A$ is given by $\norm{A}_1=\Tr \left(\sqrt{A^\dag A}\right)$.

\section{Single-Ancilla Ground-State Preparation via Lindbladians}
\label{sct:Single-Ancilla Ground-State Preparation via Lindbladians}

In this section we will briefly outline the quantum algorithm introduced in Ref.~\cite{Ding2023-rv} to prepare ground states via Markovian open-system dynamics on an early fault-tolerant digital quantum computer. We will sketch the general concept, our improvement to the implementation and the insights that we later use to interpret our findings. A more detailed and intuitive description of the algorithm is provided in Appendix~\ref{sct:Single-Ancilla Ground-State Preparation via Lindbladians_Appendix}.

If we discard the generator of unitary evolution $\mathcal{L}_H$ and employ a single jump operator $L$, Eq.~\ref{eqn:original_lindblad_equation} reduces to the purely dissipative dynamics
\begin{equation}
\label{eqn:ground_state_lindblad_equation}
    \frac{\textrm{d}\rho}{\textrm{d}t} = \mathcal{L}_L[\rho] = L \rho L^\dag - \frac{1}{2} \{L^\dag L, \rho\},
\end{equation}
which admit a formal solution
\begin{equation}
\label{eqn:formal_solution_lindblad_equation}
    \rho(t)=e^{\mathcal{L}t}[\rho(0)].
\end{equation}
From Eq.~\ref{eqn:formal_solution_lindblad_equation} we can see that in order to model the LME~\ref{eqn:ground_state_lindblad_equation} on a digital quantum device, we need to simulate the action of the non-unitary operator ${e^{\mathcal{L}t}}$. Then, to prepare the ground state of a Hamiltonian $H$, we must find a proper ansatz for $L$ which achieves ergodicity (i.e., the Markov chain is irreducible and aperiodic) and satisfies the quantum detailed balance condition~\cite{Chen:2023cuc}. These are sufficient conditions for the convergence of a Markov chain to a stationary distribution~\cite{Levin2017-na}. Ref.~\cite{Ding2023-rv} shows that they can be satisfied for certain Hamiltonians using a single jump operator $L$. We will show later that this is also the case for the Grover problem. This enables the design of a simulation scheme utilizing just one ancilla qubit.
We introduce $L$ as
\begin{equation}
    \label{eqn:jumperator_L_definition}
    L := \sum_{i,j \in [N]} \hat{\gamma}(\lambda_i - \lambda_j) \ket{i} \bra{i} A \ket{j} \bra{j} = \int_{-\infty}^{\infty} \gamma(s) A(s) \textrm{d}s.
\end{equation}
The Hermitian operator ${A(s)}$ is usually referred to as the \emph{coupling operator} and is chosen together with the Hamiltonian $H$ and the \emph{filter function} ${\gamma(s)}$ as hyperparameters of the algorithm. The coupling operator follows a Heisenberg evolution ${A(s) = e^{iHs} A e^{-iHs}}$ and the filter function is defined via the inverse Fourier transform ${\gamma(s):= \frac{1}{2\pi}\int_{\mathbb{R}} \hat{\gamma}(\omega) e^{-i\omega s}\textrm{d}\omega}$. Intuitively, the Lindbladian performs a continuous-time random walk on the energy spectrum of $H$, driven by the coupling operator $A$ that determines the transitions. $A$ is written in the energy eigenbasis (i.e., $A_{ij}$ represents the transition strength between energy eigenstates $\ket{\Psi_i}$ and $\ket{\Psi_j}$) and weighted by a bath-dependent function $\hat{\gamma}(\omega)$, which is real and non-negative~\cite{Chen:2023cuc}.

Three foundational building blocks are required to simulate the action of the non-unitary operation ${e^{\mathcal{L} t}[\rho]}$ from Eq.~\ref{eqn:formal_solution_lindblad_equation}~\cite{Ding2023-rv}. First, we have to find an approximation of ${e^{\mathcal{L} t}}$ that can be directly implemented on a quantum device. We can achieve this by repeatedly combining a unitary time evolution generated by a Hermitian, dilated jump operator $\tilde{L}$, which block encodes $L$, with non-unitary measurements on an ancilla qubit, initially prepared in the state $\ket{0}_a$. Define $\tilde{L}$ as
\begin{equation}
\label{eqn:L_tilde_definition}
    \tilde{L} := \begin{pmatrix} 0 & L^\dag \\ L & 0 \end{pmatrix}.
\end{equation}
Further, let ${\Tr_a \left( \sum_{i,j = 0}^1 \ket{i}_a \bra{j}_a \otimes \rho_{ij}\right) = \sum_{i=0}^1 \rho_{ii}}$ denote the operation that traces out the subspace of the ancilla qubit, which is used for the block-encoding. The \mbox{$m$-th} step of the approximation scheme is then given by:
\begin{equation}
\label{eqn:sigma_lindblad_approximation}
\begin{aligned}
    \sigma(m\tau)&:=\sigma_m(\tau) \\
    &:= \Tr_a \left( e^{-i\tilde{L}\sqrt{\tau}} [\ket{0}_a \bra{0}_a \otimes \sigma_{m-1}(\tau)] e^{i \tilde{L} \sqrt{\tau}}\right),
\end{aligned}
\end{equation}
where the time step $\tau$ and the total number $\mathcal{N}$ of iterations have to be chosen appropriately, such that ${\norm{\rho(\mathcal{N}\tau) - \sigma_{\mathcal{N}}(\tau)}_1 = \mathcal{O}(\epsilon)}$ for some desired accuracy $\epsilon>0$. Here, $\rho(t)$ is defined via the LME~\ref{eqn:ground_state_lindblad_equation}. We elaborate on the specific choice of these parameters in Appendix~\ref{sct:Error Analysis}. Note that after each measurement operation, the ancilla needs to be reset to the state $\ket{0}_a$, and its outcome can be discarded.

Next, to avoid a direct and expensive implementation of the time evolution operator ${e^{-i\tilde{L}\sqrt{\tau}}}$, one has to find a filter function $\gamma(s)$ that decays rapidly as ${\abs{s} \rightarrow \infty}$. This way, we can truncate and discretize the integral from Eq.~\ref{eqn:jumperator_L_definition}. We will see later in Appendix~\ref{sct:Trotter Splitting} that implementing the truncated integral results in much simpler circuits compared to those required for a block-encoded jump operator $\tilde{L}$. The discretization can for instance be achieved by means of the trapezoidal rule, as demonstrated in Appendix~\ref{sct:Truncation and Discretization of the Integral}.
Finally, the unitary evolution under the approximated dilated jump operator will contain a sum of Hermitian operators, where each constituent contributes to the discretized integral. The last step therefore consists of using the Trotter splitting of this exponential (see Appendix~\ref{sct:Trotter Splitting}).

The overall cost of implementation needed to approximate the LME up to a desired accuracy of $\epsilon>0$, which we provide in terms of the total Hamiltonian simulation time $T_H$, scales as (see Appendix~\ref{sct:Error Analysis}):
\begin{align}
\label{eqn:continuous_time_overall_complexity_lindblad}
\begin{aligned}
    T_H&=\mathcal{O}\left(\mathcal{N} \sqrt{\tau}\right) \\
    &=\tilde{\Theta}\left(\max\left(\norm{L}^4 T^2 \epsilon^{-1}, \norm{A}^{2+o(1)} T^{2+o(1)} \epsilon^{-o(1)}\right)\right).
\end{aligned}
\end{align}
Here, $\mathcal{N}$ is the number of times that the ancilla qubit is traced out and reset to the state $\ket{0}_a$ and $T$ stands for the simulation time. Note that there is a fundamental difference between $T$ and $T_H$: $T$ is the duration for which the LME is simulated, whereas $T_H$ corresponds to the overall time that terms of the form $e^{\pm iHt}$ are implemented, which are needed for the approximation scheme (see Appendix~\ref{sct:Trotter Splitting}). An increase in $\norm{A}$, will lead to a reduction in $T$; however, $T_H$, which is equivalent to the overall complexity of implementation, will remain constant. The $o(1)$ terms arise from employing a high-order Trotter formula, which results in an exponent of the form ${o(1):=1/p}$ as ${p\rightarrow \infty}$. Further, we introduce the notation $\tilde{\Theta}(g):=\Theta\left(g\,\mathrm{polylog}(g)\right)$.

We can see from Eq.~\ref{eqn:continuous_time_overall_complexity_lindblad} that the cost of implementation scales quadratically in $T$. However, we will show that the complexity can be reduced to be linear in certain cases. In the following proposition, we will prove this statement using the Grover problem as an example.

\begin{proposition}
\label{prop:Linear_scaling}
    To find the ground state of an unstructured search space with $M$ marked elements, the overall complexity of implementation in terms of the total Hamiltonian simulation time $T_H$ of the approximated continuous-time dynamics scales linearly in the mixing time. This linear scaling is achieved by choosing the coupling operator as $A:=J_N/\sqrt{(N-M)M}$, where $J_N$ denotes the all-one matrix of size $N\times N$ and the filter function $\hat{\gamma}(\omega)$ is defined to be zero for $\omega\geq0$ and one otherwise.
\end{proposition}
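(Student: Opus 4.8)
The plan is to specialize the general implementation bound of Eq.~\ref{eqn:continuous_time_overall_complexity_lindblad} to the jump operator generated by the stated coupling operator and filter, and to show that the prescribed normalization forces $\norm{L}=\mathcal{O}(1)$, which is exactly what downgrades the naive quadratic dependence on the mixing time to a linear one. Throughout I would work in the energy eigenbasis of the Grover Hamiltonian, whose spectrum has only two levels: the $M$-fold degenerate ground space of marked elements and the $(N-M)$-fold excited space of unmarked elements.

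First I would evaluate $L$ explicitly. Writing $A=J_N/\sqrt{(N-M)M}$ in this eigenbasis and applying the filter $\hat{\gamma}$, which by hypothesis retains only the strictly downhill transitions $\lambda_i-\lambda_j<0$, the double sum in Eq.~\ref{eqn:jumperator_L_definition} collapses to the single rank-one operator $L=\ket{g}\bra{e}$, where $\ket{g}$ and $\ket{e}$ denote the uniform superpositions over the marked and unmarked elements, respectively. The key computation is then $\norm{L}=1$: the prefactor $1/\sqrt{(N-M)M}$ cancels exactly against the product $\sqrt{M}\cdot\sqrt{N-M}$ of the norms of the two unnormalized all-ones vectors, so that $\norm{L}$ is independent of both $N$ and $M$. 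In the same step I would record $\norm{A}=N/\sqrt{(N-M)M}=\Theta(\sqrt{N/M})$, using that $J_N$ has the single nonzero eigenvalue $N$.

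Next I would read off the mixing time from the purely dissipative dynamics of Eq.~\ref{eqn:ground_state_lindblad_equation} generated by this $L$. Using $L^\dag L=\ket{e}\bra{e}$ and $LL^\dag=\ket{g}\bra{g}$, a short calculation of $\frac{\textrm{d}}{\textrm{d}t}\bra{e}\rho\ket{e}$ shows that the excited-space weight decays at the fixed rate $\norm{L}^2$, so the mixing time obeys $T\sim\norm{L}^{-2}\log(1/\epsilon)$. Taking the cheaply preparable uniform superposition as the initial state — whose ground-space overlap $\sqrt{M/N}$ is exponentially small, so it does not violate the no-good-initial-state premise — keeps the evolution inside the two-dimensional span of $\ket{g}$ and $\ket{e}$, and the zero-temperature filter leaves the marked ground space stationary, guaranteeing convergence to the correct target.

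Finally I would substitute $\norm{L}=1$ and $\norm{A}=\Theta(\sqrt{N/M})$ into Eq.~\ref{eqn:continuous_time_overall_complexity_lindblad}. Because $\norm{L}=\mathcal{O}(1)$, the first term carries no system-size dependence and contributes only $\tilde{\Theta}(1/\epsilon)$, so the cost is set by the $\norm{A}$-dependent term. Exploiting the scale freedom noted below Eq.~\ref{eqn:continuous_time_overall_complexity_lindblad} — rescaling $L\mapsto\sqrt{s}\,L$ raises $\norm{A}$ and lowers $T$ while leaving $\norm{L}^4T^2$ invariant — I would rewrite $\norm{A}^2T^2$ in terms of the achieved mixing time $T$, which turns it into $\tilde{\Theta}\big((N/M)\,T\big)$, i.e., linear rather than quadratic in $T$. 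This yields $T_H=\tilde{\Theta}\big((N/M)\,T\big)$, and at the stated normalization, where $T=\tilde{\Theta}(1)$, the concrete bound $T_H=\tilde{\Theta}(N/M)$, matching the classical search cost and exhibiting no Grover-type speedup. I expect the main obstacle to be precisely this combination step: Eq.~\ref{eqn:continuous_time_overall_complexity_lindblad} is manifestly quadratic in $T$, so extracting the linear scaling hinges on correctly coupling the scale of $A$ to the transition rate $\norm{L}^2$ and on a rigorous handle of the Liouvillian gap — including the fact that the zero-temperature filter makes the generator non-ergodic away from the two-level subspace — after which one must still verify that the discretized map $\sigma_m(\tau)$ of Eq.~\ref{eqn:sigma_lindblad_approximation} reproduces this mixing up to accuracy $\epsilon$.
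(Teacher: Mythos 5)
Your proposal is correct and follows essentially the same route as the paper's proof: compute $L$ explicitly for the given $A$ and $\hat{\gamma}$ to get $\norm{L}=1$ (you do this by rank-one factorization $L=\ket{g}\bra{e}$, the paper via the Frobenius norm of a rank-one operator), read off the $\Theta(1)$ mixing time from the resulting two-level dissipative ODEs, and use the scale invariance of $\norm{A}^2 T_{\mathrm{mix}}$ to turn the quadratic $T$-dependence of Eq.~\ref{eqn:continuous_time_overall_complexity_lindblad} into a linear one. The only substantive point the paper makes that you phrase slightly differently is the explicit observation that $\tilde{\mathcal{L}}=\eta^2\mathcal{L}$ under $A\mapsto\eta A$, which is exactly the ``scale freedom'' you invoke.
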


We defer the proof to Appendix~\ref{sct:Error Analysis} after presenting the algorithm in more detail. The basic idea is to show that by increasing the magnitude of $A$, the time to reach a steady state will decrease. Hence, we can shift all complexity into the norm of $A$ such that $T=1$. We demonstrate that in this scenario, $\norm{L}=1$, which makes it possible to reduce the first argument in Eq.~\ref{eqn:continuous_time_overall_complexity_lindblad} as ${\mathcal{O}\left(\norm{L}^4 T^2 \epsilon^{-1}\right) \rightarrow \mathcal{O}\left(\epsilon^{-1}\right)}$. This leads to an overall complexity of the form ${\tilde{\Theta}\left(\mathrm{max}\left(\epsilon^{-1},\norm{A}^{2+o(1)}\epsilon^{-o(1)}\right)\right)}$, where $\norm{A}^2$ is equivalent to a linear dependency in $T$, because the product of $\norm{A}^2$ and $T$ is a constant. We will give more details on these statements later in the proof and provide numerical evidence for Proposition~\ref{prop:Linear_scaling} in Figure~\ref{fig:Short_Range_Comparison}.

Furthermore, as already noted in Ref.~\cite{Ding2023-rv}, our objective might not necessarily be to precisely replicate the Lindblad dynamics but rather to prepare the ground state of $H$. Note that if we choose $L$, such that ${L\ket{\Psi_g}=0}$, we further have ${\mathcal{L}_L [\ket{\Psi_g}\bra{\Psi_g}] = 0}$ and hence the ground state is a fixed point of the LME~\ref{eqn:ground_state_lindblad_equation}:
\begin{equation}
\label{eqn:steady_state_argument}
    e^{\mathcal{L}_L t} [\ket{\Psi_g}\bra{\Psi_g}] = \ket{\Psi_g}\bra{\Psi_g}, \; \forall t\geq 0.
\end{equation}
Thus, we can tailor the algorithm to ground-state preparation by recognizing that the steady-state argument from Eq.~\ref{eqn:steady_state_argument} remains valid for the approximation in Eq.~\ref{eqn:sigma_lindblad_approximation}:
\begin{equation}
\label{eqn:discrete_time_convergence_holds}
\begin{aligned}
    &\tilde{L} \left(\ket{0}_a \bra{0}_a \otimes \rho_g\right) = \begin{pmatrix}
        0 \\ L \rho_g
    \end{pmatrix} = \begin{pmatrix}
        0 \\ 0
    \end{pmatrix} \\
    \Longleftrightarrow \; &\sigma(t)[\rho_g]=\Tr_a \left( e^{-i\tilde{L}\sqrt{t}} [\ket{0}_a \bra{0}_a \otimes \rho_g] e^{i \tilde{L} \sqrt{t}}\right) = \rho_g.
\end{aligned}
\end{equation}
We will show in Appendix~\ref{sct:Discrete-Time Dynamics} that this allows us to choose a larger time step $\tau$ and reduce the overall Hamiltonian simulation time, which is on the order of the total cost of implementation, to:
\begin{equation}
    T_H = \tilde{\Theta}\left(T^{1+o(1)} \norm{A}^{2+o(1)} \epsilon^{-o(1)}\right).
\end{equation}
To distinguish the new ansatz from the continuous-time approach that follows the LME, we will refer to the former as \emph{discrete-time} ground-state preparation algorithm. In summary, instead of following the LME~\ref{eqn:ground_state_lindblad_equation}, we now use the approximation scheme in Eq.~\ref{eqn:sigma_lindblad_approximation} and simply choose a larger time step $\tau$. This approach will still prepare the ground state if Eq.~\ref{eqn:steady_state_argument} holds.

We want to emphasize that alternative algorithms~\cite{ding2024simulating, cleve_et_al:LIPIcs.ICALP.2017.17, li2023simulating} result in an equivalent complexity with the advantage of not having to fulfill the steady state argument at all times. Nevertheless, they rely on cumbersome block encodings and precise control circuits, whereas in our case only a single ancilla qubit with simpler time evolutions is required. Further, note that the discrete-time mixing time is not guaranteed to possess an equivalent scaling as the continuous-time Lindblad dynamics in Eq.~\ref{eqn:ground_state_lindblad_equation}. Therefore, our refinement of the implementation cost from Proposition~\ref{prop:Linear_scaling} might prove beneficial. We will analyze the differences between the discrete-time algorithm and the Lindblad dynamics in Section~\ref{sct:Results for Grover's Search} for the Grover problem.

\section{Methods for Proving Convergence and Estimating the Mixing Time}
\label{sct:Methods for Proving Convergence and Estimating the Mixing Time}

In this section, we will introduce the methods employed in Section~\ref{sct:Results for Grover's Search} to prove the convergence of the LME to the ground state. Furthermore, we will explain how we determine the convergence speed, dividing the discussion into continuous-time dynamics from Eq.~\ref{eqn:original_lindblad_equation} and discrete-time dynamics, following Eq.~\ref{eqn:sigma_lindblad_approximation}.

\subsection{Continuous-Time Dynamics}
\label{sct:Continuous-Time Dynamics}

To determine the steady states of the Lindblad dynamics, we will either solve the LME~\ref{eqn:original_lindblad_equation} analytically, when possible, or use the following Lemma:
\begin{lemma}
\label{lma:Convergence}

A pure state $\ket{\Psi_i}$ is a steady state of the LME iff it is in the simultaneous kernel of all Lindblad jump operators, i.e., ${L_j \ket{\Psi_i} = 0, \; \forall j}$, and at the same time an eigenstate of the Hamiltonian. If there is no subspace $S$ which is left invariant under the Lindblad jump operators, i.e., ${L_j \ket{\Phi} \in S, \; \forall \ket{\Phi} \in S, \; \forall j}$, then the states $\ket{\Psi_i}$ are the only steady states.
\end{lemma}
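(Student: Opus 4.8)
The plan is to handle the two assertions separately, writing $\rho = \ket{\Psi_i}\bra{\Psi_i}$ and using the stationarity condition $\mathcal{L}[\rho] = 0$. The sufficiency (``if'') direction is a direct substitution: assuming $L_j\ket{\Psi_i} = 0$ for all $j$ and $H\ket{\Psi_i} = \lambda_i\ket{\Psi_i}$, the coherent part gives $-i[H,\rho] = -i(\lambda_i-\lambda_i)\rho = 0$, the gain term obeys $L_j\rho L_j^\dagger = (L_j\ket{\Psi_i})(\bra{\Psi_i}L_j^\dagger) = 0$, and the anticommutator $\{L_j^\dagger L_j,\rho\}$ vanishes since $L_j^\dagger L_j\ket{\Psi_i} = 0$ and $\bra{\Psi_i}L_j^\dagger L_j = 0$. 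Hence $\mathcal{L}[\rho] = 0$.

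For the necessity (``only if'') direction the key idea is to probe $\mathcal{L}[\rho] = 0$ with vectors orthogonal to $\ket{\Psi_i}$. For any $\ket{\chi}$ with $\braket{\chi}{\Psi_i} = 0$, evaluating $\bra{\chi}\mathcal{L}[\rho]\ket{\chi} = 0$ makes the commutator term and the full anticommutator term drop out, because each of them carries a factor $\braket{\chi}{\Psi_i}$ or $\braket{\Psi_i}{\chi}$; what survives is only the manifestly non-negative gain contribution $\sum_j \abs{\bra{\chi}L_j\ket{\Psi_i}}^2 = 0$. As this holds for every $\ket{\chi}\perp\ket{\Psi_i}$, no $L_j\ket{\Psi_i}$ has a component orthogonal to $\ket{\Psi_i}$, so $L_j\ket{\Psi_i} = c_j\ket{\Psi_i}$: the state is a simultaneous eigenvector of all jump operators. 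Upgrading ``eigenvector'' to ``kernel'' requires an extra observation, since $c_j\neq 0$ is in principle compatible with stationarity (e.g.\ $H=0$, $L=\ket{\Psi_i}\bra{\Psi_i}$). In our setting the jump operators are off-diagonal in the energy eigenbasis, so $c_j = \bra{\Psi_i}L_j\ket{\Psi_i} = 0$ directly; in general one invokes the representation freedom $L_j\mapsto L_j - c_j I$ together with a compensating Hermitian shift of $H$, under which $\mathcal{L}$ is invariant and $\ket{\Psi_i}$ is sent into the common kernel. Once $L_j\ket{\Psi_i} = 0$ for all $j$, the dissipator vanishes on $\rho$ entirely and stationarity collapses to $[H,\rho] = 0$; for the rank-one $\rho$ this forces $H\ket{\Psi_i}\in\mathrm{span}\{\ket{\Psi_i}\}$, i.e.\ $\ket{\Psi_i}$ is an eigenstate of $H$.

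For the uniqueness statement I would invoke the standard fact that the support (range) of any stationary density matrix is left invariant by every jump operator $L_j$. Under the hypothesis that the only subspaces invariant under the $L_j$ are the spans of the dark states $\ket{\Psi_i}$, the support of an arbitrary steady state must then lie in $\mathrm{span}\{\ket{\Psi_i}\}$, and a short argument that coherences between non-degenerate Hamiltonian eigenstates cannot be stationary (they precess under $\mathcal{L}_H$) shows that the only steady states are the $\ket{\Psi_i}\bra{\Psi_i}$. I expect this last part to be the main obstacle: one must prove the support-invariance lemma carefully, and interpret the invariant-subspace hypothesis correctly, since any pure steady state already makes its own one-dimensional span invariant, so the condition must be read as forbidding invariant subspaces \emph{other} than the dark-state spans, thereby ruling out additional steady states hidden in a decoherence-free subspace.
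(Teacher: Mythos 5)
The paper does not actually prove this lemma---it is quoted as a known result and the proof is deferred entirely to Ref.~\cite{Kraus2008}---so there is no in-paper argument to compare against; judged on its own, your proposal follows the standard route of that reference and is essentially sound. The sufficiency direction by direct substitution is correct, and the positivity probe $0=\bra{\chi}\mathcal{L}[\rho]\ket{\chi}=\sum_j\abs{\bra{\chi}L_j\ket{\Psi_i}}^2$ for $\ket{\chi}\perp\ket{\Psi_i}$ is exactly the right way to extract $L_j\ket{\Psi_i}=c_j\ket{\Psi_i}$. Your observation that the lemma as literally stated fails without further input (the $H=0$, $L=\ket{\Psi_i}\bra{\Psi_i}$ example is a genuine counterexample to the ``kernel'' claim) is correct and valuable: the precise statement in the cited reference is the eigenvector version, with stationarity equivalent to $\ket{\Psi}$ being a common eigenvector of the $L_j$ and of a \emph{shifted} Hamiltonian $H+\tfrac{1}{2i}\sum_j(c_j^*L_j-c_jL_j^\dagger)$; the form quoted in the paper is the traceless-gauge specialization, which does hold for the jump operators actually used here.

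Three soft spots remain, all minor relative to the fact that the paper offers no proof at all. First, your shortcut ``$c_j=\bra{\Psi_i}L_j\ket{\Psi_i}=0$ because $L_j$ is off-diagonal in the energy eigenbasis'' presupposes that $\ket{\Psi_i}$ is an energy eigenstate, which at that stage of the argument has not yet been established; the gauge-freedom route is the one that actually closes the loop. Second, after the gauge shift the commutator condition yields that $\ket{\Psi_i}$ is an eigenstate of the shifted Hamiltonian, not automatically of the original $H$; these coincide only when the $c_j$ vanish, so the order of the argument matters. Third, the uniqueness part rests on the support-invariance property of stationary states, which you invoke but do not prove, and on reading the hypothesis as excluding invariant subspaces \emph{other} than the dark-state spans---both reasonable, and your flagging of the interpretive issue is apt, but a self-contained proof would need to supply the invariance lemma (e.g.\ via the Baumgartner--Narnhofer decomposition or a direct argument that $\Tr(P^\perp\mathcal{L}[\rho]P^\perp)\ge 0$ for $P$ projecting onto the support).
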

\noindent A detailed proof can be found in Ref.~\cite{Kraus2008}. We state three distinct approaches that fulfill these conditions for Grover's search in Table~\ref{table:results}. The proofs of convergence to the ground state are provided in Section~\ref{sct:Continuous-Time Random Walk} and Appendix~\ref{sct:Quantum Continuous-Time Short-Range Convergence}.

The rate of convergence can be examined by analyzing the spectrum of $\mathcal{L}$. It can be shown~\cite{Rivas2012} that all nonzero eigenvalues of $\mathcal{L}$ have a negative real part. The real part determines -- in analogy to classical dynamical systems -- the rates of decay to the steady state, starting from different initial states~\cite{Minganti2018}. Consequently, we are interested in the eigenvalue $\alpha^*$, which is non-zero and has the largest real part, as thus defines the relevant relaxation time scale of the LME.

To determine $\alpha^*$ analytically, we rewrite the LME while reducing the Fock-Liouville space to a low-dimensional effective space, where the Lindbladian can be easily diagonalized. We do this by exploiting the symmetry of the underlying problem Hamiltonian. All calculations will be performed in the energy eigenbasis of $H$, where ${\rho_{ab} := \bra{a} \rho \ket{b}}$. Rewriting the LME~\ref{eqn:ground_state_lindblad_equation} results in:
\begin{widetext}
\begin{equation}
\begin{aligned}
\label{eqn:lindblad_equation_in_indices}
    \frac{\textrm{d}\rho_{ab}}{\textrm{d}t} &= \bra{a} \mathcal{L} [\rho] \ket{b} \\
&= \sum_{ij} \left( \hat{\gamma}_{ai} \hat{\gamma}_{bj} A_{ai} A_{bj} \rho_{ij} - \frac{1}{2} \left(\hat{\gamma}_{ji} \hat{\gamma}_{ja} A_{ji} A_{ja} \rho_{ib} + \hat{\gamma}_{ij} \hat{\gamma}_{ib} A_{ij} A_{ib} \rho_{aj} \right) \right).
\end{aligned}
\end{equation}
\end{widetext}
Here, we introduce the notations ${\hat{\gamma}_{ij}:=\hat{\gamma}(\lambda_i - \lambda_j)}$ and ${A_{ij}:=\bra{i}A\ket{j}}$. Finally, to gain a better understanding of the influence of coherence terms on the rate of convergence, we will further analyze the Lindblad equation in its diagonal form. For a non-degenerate spectrum, the diagonal elements of the density matrix $p_i := \rho_{ii}$ in Eq.~\ref{eqn:lindblad_equation_in_indices} decouple from the off-diagonal coherence terms \blue{(for details, see Ref.~\cite[Chapter~3]{Breuer2007})}:
\begin{equation}
    \label{eqn:lindblad_equation_diagonal}
    \frac{\textrm{d}p_i}{\textrm{d}t}=\sum_j \hat{\gamma}_{ij}^2  A_{ij}^2  p_{j} - \sum_j \hat{\gamma}_{ji}^2  A_{ji}^2  p_{i}.
\end{equation}
Eq.~\ref{eqn:lindblad_equation_diagonal} provides an approximation in cases, where the coherences decay quickly due to dephasing noise or when the degeneracies are broken by perturbations~\cite{Vogl_2010}. For example, room-temperature exciton transport properties have been investigated using this approach in Ref.~\cite{Davidson2020, Davidson2022}, where dephasing happens over a much shorter time scale than thermal relaxation. We will refer to it as the diagonal Lindblad Master equation (DLME). By performing a re-normalization step ${A_{ij}^2 \rightarrow \bar{A}_{ij}}$ and imposing ${\sum_i \bar{A}_{ij} = 0, \; \forall j}$, to ensure the conservation of probability, we arrive at:
\begin{equation}
    \label{eqn:pauli_master_equation}
    \frac{\textrm{d} p_i}{\textrm{d}t} = \sum_j \bar{A}_{ij}  p_j = \sum_{j \neq i} \left(\bar{A}_{ij}  p_j - \bar{A}_{ji} p_i\right).
\end{equation}
Eq.~\ref{eqn:pauli_master_equation} is known as Pauli master equation (PME) or rate equation. \blue{Note that although the DLME and PME yield the same dynamics, they require different cost calculation methods due to the re-normalization step. Additionally, since the PME enforces probability conservation, the filter function $\gamma$ can be omitted. The circuit under discussion follows the DLME, a physical equation that explains phenomena such as Dicke superradiance~\cite{Vogl_2010}, whereas the PME provides a framework for modeling a Markovian continuous-time CRW~\cite{Wong2022}. In Section~\ref{sct:Continuous-Time Random Walk}, we analyze the differences between these two equations, focusing on the quadratic speedup observed in Ref.~\cite{Vogl_2010} when transitioning from Eq.~\ref{eqn:lindblad_equation_in_indices} to Eq.~\ref{eqn:lindblad_equation_diagonal} for the Grover problem. We clarify that this speedup effectively reduces the implementation complexity from $\Theta(N^2)$ to $\Theta(N)$.} Furthermore, we will discuss the conditions under which the LME behaves (dis)similarly to the PME. Noteably, we will demonstrate that Eq.~\ref{eqn:lindblad_equation_in_indices} reduces to Eq.~\ref{eqn:pauli_master_equation} for the ansatz in Section~\ref{sct:Convergence under the Eigenstate Thermalization Hypothesis}.

\subsection{Discrete-Time Dynamics}

The discrete-time dynamics are no longer governed by the LME, but instead by Eq.~\ref{eqn:sigma_lindblad_approximation}. However, we can infer from Eq.~\ref{eqn:discrete_time_convergence_holds} that the steady state argument also holds for $\tilde{L}$ and hence we can employ the same ansätze as in the continuous-time dynamics. In the following, we will differentiate between the Multi-Trace (i.e., tracing out the ancilla many times) and the Single-Trace setting (i.e., performing a unitary evolution followed by a single trace-out operation at the end).

To estimate the rate of convergence, we need to find an analytical expression for the \mbox{$m$-th} step of the iterative Eq.~\ref{eqn:sigma_lindblad_approximation} that is dependent only on the initial state. This expression must then be solved for the total number $\mathcal{N}$ of iterations required to prepare the ground state and an appropriate time step $\sqrt{\tau}$ must be chosen. Hence, we define the discrete-time mixing time as the overall simulation time $T=\mathcal{N} \cdot \sqrt{\tau}$, after which the ground-state overlap ${\mu_g := \bra{\Psi_g}\sigma_{\mathcal{N}}(\tau)\ket{\Psi_g}}$ is larger than $1-\epsilon'$, for some desired accuracy $\epsilon'$ with $0\leq\epsilon'\leq1$. Note that the time step $\sqrt{\tau}$ is chosen in accordance with the convention adopted in Eq.~\ref{eqn:sigma_lindblad_approximation}.

Finally, we will compare the results to a Markovian discrete-time CRW, which can be modelled by~\cite{Wong2022}:
\begin{equation}
\label{eqn:classical_discrete_time_markov_chain}
    \vec{\mu}_t=\vec{\mu}_0 P^t,
\end{equation}
where $P$ is referred to as \emph{transition matrix} and must be chosen to be stochastic, that is, its entries are all non-negative and fulfill $\sum_j P_{ij}=1, \; \forall i$. $\vec{\mu}_t$ denotes the probability vector at step $t$.

\section{Results for Grover's Search}
\label{sct:Results for Grover's Search}

The system Hamiltonian we are examining may be seen as an unstructured search space, with exactly one marked element, which is the unique ground state $\ket{g}$, and exponentially many degenerate excited states:
\begin{equation}
\label{eqn:grover_hamiltonian}
    H = - \Delta  \ket{g}\bra{g}.
\end{equation}
For simplicity, we set ${\Delta=1}$. Note that the position of the marked element is unknown initially, and the objective is to identify it.

The algorithm detailed in Section~\ref{sct:Single-Ancilla Ground-State Preparation via Lindbladians} does not inherently ensure ergodicity and quantum detailed balance for such a system. However, we can achieve this by a suitable choice of coupling operator $A$ and filter function $\hat{\gamma}$. We refer to the different choices of $A$ and $\hat{\gamma}$ as transition or coupling regimes. Figure~\ref{fig:transition_regimes} offers a visualization of the different transition regimes. These are introduced in more detail in the following Subsection~\ref{sct:Continuous-Time Random Walk}.

An overview of the results that are derived in this section is provided in Table~\ref{table:results}. In the continuous-time long-range approach, where all excited states are connected to the ground state via the coupling operator $A$, we observe a quadratic speedup in quantum dynamics. This speedup, however, is only evident over the DLME and not the classical PME. When we employ the PME, we find that the overall complexity remains linear in $N$. The reason is that the quantum continuous-time algorithm is specifically designed to simulate the LME and not ground-state preparation, which is our actual objective. Nevertheless, we can recover the Grover speedup for the discrete-time dynamics in the Single-Trace setting, which is equivalent to a long unitary evolution and a single trace-out operation in the end. All other approaches result in a mixing time of $\Theta(N)$ (in case ${\norm{A}=1}$, which makes the mixing time equivalent to the overall complexity).

\begin{figure}
  \centering
  \import{Images/}{transition_regimes.pdf_tex}
  \caption{\justifying A sketch of the (a) long-range and (b) short-range transition regimes for ${n=3}$ and ${\ket{g}=\ket{001}}$. The $\mathcal{H}_i$ are defined in Section~\ref{sct:Single Bit-Flip Operators}.}
  \label{fig:transition_regimes}
\end{figure}

\begin{table*}
\centering
\begin{ruledtabular}
\begin{tabular}{c c c c c c c}
 Ansatz & Transitions & $\norm{A}$ &  $\hat{\gamma}(0)$ & $T_{L}$ & $T_{D}$ & $T_P$ \\ [0.5ex] 
 \hline\hline
 Continuous-time & \multicolumn{6}{c}{} \\
 \hline
 ETH (\ref{sct:Convergence under the Eigenstate Thermalization Hypothesis}) & random & $\sqrt{N}$ & 0 & $\Theta(N)$ & $\Theta(N)$ & $\Theta(N)$ \\ 
 Single Projector (\ref{sct:A Single Projector}) & long-range & $N$ & 0 & $\Theta(N)$ & $\Theta(N^2)$ & $\Theta(N)$ \\
 \blue{Single Bitflip (\ref{sct:Single Bit-Flip Operators})} & \blue{short-range} & $n$ & 1 & \blue{$\sim\Theta(N)$} & \blue{$\sim\Theta(N)$} & \blue{$\sim\Theta(N)$} \\ [1ex] 
 \hline \hline
 Discrete-time & \multicolumn{6}{c}{} \\
 \hline
 Multi-Trace (\ref{sct:Multi-Trace}) & long-range & $N$ & 0 & $\Theta(N)$ & - & $\Theta(N)$ \\
 Single-Trace (\ref{sct:Single-Trace}) & long-range & $N$ & 0 & $\Theta(\sqrt{N})$ & - & - \\
\end{tabular}
\end{ruledtabular}
\caption{\justifying A summary of the results derived in this work. The continuous-time dynamics are categorized into random, long-range, and short-range transitions, defined by different choices of $A$ and $\hat{\gamma}$, which has to be chosen appropriately at $\omega=0$ to ensure convergence. For the discrete-time analysis, we focus on the long-range ansatz and distinguish between applying a trace-out operation many times (Multi-Trace) or only once at the end (Single-Trace). $T_{\framebox(2,2){}}$ represents the mixing time (with $\norm{A}=1$, making $T_{\framebox(2,2){}}$ equivalent to the overall complexity), and indices $\,\framebox(6,6){} \in \{L, D, P\}$ indicate that the dynamics follow the LME~\ref{eqn:lindblad_equation_in_indices} (or Eq.~\ref{eqn:sigma_lindblad_approximation} for discrete-time), DLME~\ref{eqn:lindblad_equation_diagonal}, or PME~\ref{eqn:pauli_master_equation} (or Eq.~\ref{eqn:classical_discrete_time_markov_chain}). In the continuous-time long-range approach, we observe a quadratic speedup for the LME over the DLME, but not the classical PME. \blue{Note that the scaling of the short-range ansatz has not been analytically proven but has been validated through numerical simulations with up to 48 qubits.} The discrete-time algorithm recovers a Grover speedup in the Single-Trace setting.}
\label{table:results}
\end{table*}

\subsection{Continuous-Time Random Walk}
\label{sct:Continuous-Time Random Walk}

In this subsection, we analyze the continuous-time approaches proposed in Table~\ref{table:results}. For each ansatz, we solve the LME, deduce the mixing time, and compare the dynamics to those derived from the classical PME. We demonstrate the equivalence between the LME~\ref{eqn:lindblad_equation_in_indices} and the PME~\ref{eqn:pauli_master_equation} in the Eigenstate Thermalization Hypothesis (ETH) approach and highlight the disparities in the long- and short-range coupling regimes. Furthermore, we specify the conditions on the initial state $\rho(0)$, for which the convergence to the ground state is assured.

In the following, $T_{\framebox(2,2){}}$ stands for the mixing time and the indices $\,\framebox(6,6){} \in \{L, D, P\}$ indicate that the dynamics follow the LME, DLME, or PME, respectively. In all calculations performed in Section~\ref{sct:Continuous-Time Random Walk}, we fix ${\norm{A}=1}$ to transfer all complexity to the mixing time $T$. Although Proposition~\ref{prop:Linear_scaling} states that the optimal scaling is sometimes recovered when shifting all complexity into $\norm{A}$, we believe that it is more intuitive to keep the mixing time as the computational bottleneck for the presentation of the results.

\subsubsection{Convergence under the Eigenstate Thermalization Hypothesis}
\label{sct:Convergence under the Eigenstate Thermalization Hypothesis}

The authors in Ref.~\cite{Ding2023-rv} propose an approach that leads to ergodicity in the expectation sense for arbitrary Hamiltonians and provide a proof that indicates that the mixing time of the LME might be polynomial in the system size $n$ under certain conditions. Their approach requires $A$ to be chosen independently at random from a normal distribution on the set of Hermitian matrices such that the entries fulfill ${\mathbb{E}(A_{ij})=0, \; \forall t \geq 0}$. In practice, this implies that the coupling matrix needs to be periodically chosen anew, typically after performing a certain number of iterations. We further need to assume that ${\sigma_{ij}:=\mathbb{E}(|A_{ij}|^2) > 0}$ to generate transitions. This ansatz can also be seen as a version of the Eigenstate Thermalization Hypothesis (ETH)~\cite{Srednicki1999, DAlessio2016}, which requires additional assumptions on the variance of $A_{ij}$.

For simplicity, we assume a sharp filter function with ${\hat{\gamma}(\omega) = 0}$, for ${\omega \geq 0}$ and ${\hat{\gamma}(\omega) = 1}$ otherwise, which is illustrated in Figure~\ref{fig:gamma_w_and_s_a}. Thus, we only allow transitions to lower-energy eigenstates. In reality, the assumptions given in Eq.~\ref{eqn:conditions_on_gamma} have to be fulfilled, to allow for an efficient implementation. A filter function that approximates the step function and is suitable for real application is given in Eq.~\ref{eqn:gamma_of_omega}. The effects of an imperfect $\hat{\gamma}$ are analyzed in Appendix~\ref{sct:Influence of the Hamiltonian}.

Before we can make statements about the rate of convergence, we need to restrict the coupling strength $\sigma_{ij}$. For a Hermitian matrix $A$, whose entries are drawn independently at random from a complex normal distribution with unit variance and mean zero, it can be shown that~\cite{terrencetao}
\begin{equation*}
\mathbb{E}\left(\norm{A}\right)=\mathcal{O}\left(K \sqrt{N}\right),
\end{equation*}
for some $K>0$, if the entries of $A$ fulfill the second and fourth moment bounds:
\begin{align}
\label{eqn:conditions_on_A_ETH}
\begin{split}
    \underset{i}{\sup}\, \mathbb{E}\left(A_{ij}^2\right) &\leq K^2 N, \\
    \underset{j}{\sup}\, \mathbb{E}\left(A_{ij}^2\right) &\leq K^2 N, \\
    \sum_{i,j} \mathbb{E}\left(A_{ij}^4\right) &\leq K^4 N^2, \; \forall i,j.
\end{split}
\end{align}
Therefore, we choose the coupling operator ${A \rightarrow \eta A}$, with ${\eta = \frac{1}{\sqrt{N}}}$, such that we can expect \blue{${\mathbb{E}\left(\norm{A}\right)=\mathcal{O}(1)}$} and thus shift all complexity into the mixing time $T_L$. Further, let $A$ fulfill Eq.~\ref{eqn:conditions_on_A_ETH}.
Now, define ${z_{g} := \rho_{gg}}$ and ${z_e^D := \sum_{a \neq g} \rho_{aa}}$ as the populations of the ground and excited states, respectively. Rewriting Eq.~\ref{eqn:lindblad_equation_in_indices} with those variables results in:
\begin{widetext}
\begin{equation}
\label{eqn:ETH_ODEs}
\begin{aligned}
    \mathbb{E}(\dot{z}_g) = \sum_{ij} \mathbb{E}(\hat{\gamma}_{gi} \hat{\gamma}_{gj} \underbrace{A_{gi} A_{gj}}_{=\sigma_{gi}\delta_{ij}} \rho_{ij}&-\frac{1}{2}(\hat{\gamma}_{ji} \underbrace{\hat{\gamma}_{jg}}_{=0} A_{ji} A_{jg} \rho_{ig} + \hat{\gamma}_{ij} \underbrace{\hat{\gamma}_{ig}}_{=0} A_{ij} A_{ig} \rho_{gj})) \\ =\eta^2  \mathbb{E}\left(\sum_{a \neq g} \rho_{aa}\right) &= \eta^2 \mathbb{E}\left(z_e^D\right) \\
    \mathbb{E}(\dot{z}_e^D)&=-\eta^2  \mathbb{E}(z_e^D).
\end{aligned}
\end{equation}
\end{widetext}
Here, we introduced the Kronecker delta $\delta_{ij}$ and used the fact that ${\mathbb{E}(A_{ij}  A_{mn}) = 0}$, if ${(i,j) \neq (m,n)}$. This calculation allows to estimate the mixing time $T_L$, by solving the system of ordinary differential equations (ODEs)~\ref{eqn:ETH_ODEs}:
\begin{equation*}
\begin{aligned}
    \mathbb{E}(z_g(t))&=z_e^D(0)(1-\underbrace{e^{-\eta^2 t}}_{=e^{-\frac{1}{T_L} t}}) + z_g(0) \\
    \mathbb{E}(z_e^D(t))&=z_e^D(0) e^{-\eta^2 t}.
\end{aligned}
\end{equation*}
We see that the approach converges to the ground state for any initial state, since ${\lim_{t \rightarrow \infty} \mathbb{E}(z_g(t)) = z_g(0) + z_e^D(0) = 1}$.
Recall from Section~\ref{sct:Continuous-Time Dynamics} that we defined the mixing time $T_L$ via the rates of decay to the steady state. Hence, we yield ${T_{L}=\frac{1}{\eta^2}=N}$ on average. As mentioned at the beginning of this subsection, Theorem 21 in Ref.~\cite{Ding2023-rv} indicates a scaling of ${T_L=\tilde{\mathcal{O}}(\mathrm{poly}(n))}$ for the ETH-type approach. However, the authors assume transition rates that are not exponentially small in the system size, specifically $\sigma_{ij}=\mathcal{O}\left(1/\mathrm{poly}(n)\right)$, which leads to an exponentially large $\norm{A}$ for our setting. We will prove in Proposition~\ref{prop:Linear_scaling}, that the complexity can be shifted between $\norm{A}^2$ and $T_L$, but the overall complexity in terms of the total Hamiltonian simulation time $T_H=\tilde{\Theta}\left(T_L \norm{A}^2\right)$ remains constant. Consequently, this implies that the ETH-type approach performs comparably to classical exhaustive database search.

Indeed, we can show that the LME reduces to the coherence-free DLME on average:
\begin{equation}
    \label{eqn:ETH_reduction_to_pauli}
    \frac{\textrm{d}\mathbb{E}(\rho_{ab})}{\textrm{d}t} = \mathbb{E}(\mathcal{L}[\rho_{ab}]) = \sum_{i\neq g} \delta_{ag}  \sigma_{ai}  \rho_{ii} - \frac{1}{2}  (\sigma_{ga} + \sigma_{gb})  \rho_{ab}.
\end{equation}
We observe that this approach yields the same equations of motion as the CRW governed by the PME~\ref{eqn:pauli_master_equation}. This result can easily be extended to more general Hamiltonians.

\subsubsection{A Single Projector}
\label{sct:A Single Projector}

Next, we choose ${A=\eta J_N}$, where $J_N$ denotes the all-one matrix of dimensions $N \times N$, i.e., $A_{ij}=1$. When norming $A$ by setting $\eta=1/N$, this ansatz is equivalent to ${A=\ket{s}\bra{s}}$, where ${\ket{s} = \frac{1}{\sqrt{N}} \sum_{i} \ket{i}}$ is the uniform superposition of all computational basis states. In contrast to Section~\ref{sct:Convergence under the Eigenstate Thermalization Hypothesis}, the transitions in this section are not chosen randomly, but remain constant throughout the protocol. The long-range approach has been analyzed in an earlier work and a quadratic quantum speedup from $\mathcal{O}\left(N^2\right)$ for the DLME to $\mathcal{O}(N)$ for the LME has been found~\cite{Vogl_2010}. The authors note that stringent restrictions on the operator norm of the coupling operator could prevent achieving a Grover speedup. We address this open question by mapping the results to the implementation presented in Section~\ref{sct:Single-Ancilla Ground-State Preparation via Lindbladians}, thereby properly evaluating the complexity.

Following an identical procedure as employed in the previous Subsection~\ref{sct:Convergence under the Eigenstate Thermalization Hypothesis} to derive the rate of convergence, we define the variables ${z_g:=\rho_{gg}}$ and ${z_{e}:=\sum_{\substack{a \neq g \\ b \neq g}} \rho_{ab}}$. Note, that $z_e$ now includes off-diagonal entries. By rewriting Eq.~\ref{eqn:lindblad_equation_in_indices} with these variables, we get to the following system of ODEs:
\begin{equation}
\begin{aligned}
\label{eqn:single_projector_ODEs}
    \dot{z}_g &= \dot{\rho}_{gg} = \eta^2  \sum_{\substack{a \neq g \\ b \neq g}} \rho_{ab} = \frac{1}{N^2}  z_e \\
    \dot{z}_{e} &= \sum_{\substack{a \neq g \\ b \neq g}} \dot{\rho}_{ab} = -(N-1)  \eta^2 z_{e} = - \frac{N-1}{N^2}  z_{e}.
\end{aligned}
\end{equation}
After solving them, we find that this setup results in a mixing time of ${T_L = \frac{N^2}{N-1} = \Theta(N)}$:
\begin{equation*}
\begin{aligned}
    z_g(t)&=z_g(0) + \frac{z_e(0)}{N-1} (1-\underbrace{e^{-(N-1)t/N^2}}_{=e^{-\frac{1}{T_L} t}}) \\
    z_e(t)&=z_e(0) e^{-(N-1)t/N^2}.
\end{aligned}
\end{equation*}
We further see that convergence to the ground state is guaranteed if we have ${\lim_{t\rightarrow \infty} z_g(t) = z_g(0) + \frac{z_e(0)}{N-1} \overset{!}{=} 1}$, which is for instance fulfilled for $\rho(0)=\ket{s}\bra{s}$.

When decoupling the probabilities from the coherences in Eq.~\ref{eqn:single_projector_ODEs} by lifting the degeneracies of $H$, we see that the convergence time of the DLME scales as ${T_D=\Theta(N^2)}$. After a re-normalization step of $A$, that has been described in Section~\ref{sct:Continuous-Time Dynamics}, we arrive at the ODEs for the PME:
\begin{equation}
\begin{aligned}
\label{eqn:single_projector_pauli_ODEs}
    \mu_g &= \frac{1}{N} \sum_{a\neq g} \mu_a \\
    \sum_{a\neq g} \mu_a &= - \frac{1}{N}  \sum_{a\neq g} \mu_a.
\end{aligned}
\end{equation}
From this we can infer that the CRW converges after the time ${T_P=\Theta(N)}$. Summarizing, we observe a quadratic quantum speedup for the LME in comparison the DLME, but not to a CRW (PME). Hence, we are not able to recover the Grover speedup in this setting.

Note that Eq.~\ref{eqn:single_projector_pauli_ODEs} can be derived directly from the PME without the reduction of the LME to the DMLE and the re-normalization of the DLME. The approach is equivalent to a spatial search on an all-to-all network~\cite{Wong2022}. Intuitively, a walker hops from vertex to vertex, querying an oracle along the way, until the marked state has been located. Mathematically, this can be modeled by choosing the coupling matrix as the Laplacian ${A=\frac{M-D}{\norm{M-D}}}$ of a fully connected graph with an absorbing vertex at $g$. $M$ denotes the adjacency matrix, and $D$ is the degree matrix. Employing the spatial-search ansatz in the LME results in the same ODEs as in Eq.~\ref{eqn:single_projector_ODEs}. To see this, one has to redefine the filter function to additionally allow transitions between degenerate eigenstates: ${\hat{\gamma}(\omega)=0,\;\forall \omega>0}$ and one otherwise (i.e., $\hat{\gamma}(0)=1$). Thus, the coupling between degenerate eigenstates and the diagonal entries of $A$ do not influence the dissipative Lindblad dynamics in the long-range approach.

In contrast to the ETH-type ansatz, we find that for the long-range approach, it is not possible to decouple the coherences from the quantum LME without lifting the degeneracies. After defining ${z_e^D := \sum_{a\neq g} \rho_{aa}}$ and ${z_e^O:=\sum_{i \neq g} \sum_{\substack{j \neq g \\ j \neq i}} \rho_{ij}}$, we get:
\begin{equation}
\label{eqn:single_projector_reduction_to_pauli}
    \dot{z}_e^D = \sum_{a\neq g} \dot{\rho}_{aa} = -\frac{1}{2} \sum_{\substack{a \neq g \\ i \neq g}} A_{gi}  A_{ga}  (\rho_{ia} + \rho_{ai}) = -\eta^2 (z_e^D + z_e^O).
\end{equation}
Hence, the quantum dynamics of the LME are different to those of the PME. Nevertheless, as it can be seen from Eq.~\ref{eqn:single_projector_ODEs}, the joint ODE for the diagonal and coherence terms $z_e$ exhibit an equivalent form as the PME in Eq.~\ref{eqn:single_projector_pauli_ODEs}.

Finally, we want to highlight the difference between the unitary evolution, which results in a quadratic speedup, and the dissipative approach. In the case of the former, by choosing ${H=-\ket{g} \bra{g} - \frac{1}{N} A}$, where $A$ denotes the previously defined graph Laplacian, the dynamics reach a ground-state overlap of ${\mu_g=1}$ at ${t = \frac{\pi}{2} \sqrt{N}}$. This is quadratically faster than the classical analog~\cite{Wong2022}. Note that the propagation does not converge to a steady state, but performs a periodic oscillation. When the jump operator $L$ is chosen to be Hermitian in the purely dissipative framework -- similar to the Hamiltonian in unitary dynamics that results in a quadratic speedup -- the maximally mixed state is always the steady state of the evolution~\cite{Kraus2008}.

\subsubsection{Single Bit-Flip Operators}
\label{sct:Single Bit-Flip Operators}

Finally, we investigate the short-range transition regime with the ansatz ${A=\eta  \sum_i X_i}$, where $X_i$ is the Pauli-$X$ operator, applied to the \mbox{$i$-th} qubit. See Figure~\ref{fig:transition_regimes} for an intuitive illustration of the coupling regime. Since $A$ is Hermitian, we have ${\norm{A} = |\lambda_{\mathrm{max}}(A)| = \sum_i |\lambda_{\mathrm{max}}(X_{i})| = n=\log_2 N}$, where $\lambda_{\mathrm{max}}(A)$ denotes the largest eigenvalue of the matrix $A$. Thus, in order to achieve ${\norm{A} = 1}$, we set ${\eta = 1/n}$. Note that in comparison to the previous sections, $\eta$ is not exponentially small.

To simplify the calculations, we introduce the following notions. When dividing the Hilbert space into subspaces of states ${\mathcal{H} = \{\mathcal{H}_0, ..., \mathcal{H}_{\alpha}, ..., \mathcal{H}_{n}\}}$, with Hamming distance $\alpha$ to the solution, each subspace is spanned by ${\binom{n}{\alpha}}$ basis states. Further, for every state in ${\mathcal{H}_{\alpha}}$, there are ${(n-\alpha)}$ single-bitflips that lead to ${\mathcal{H}_{\alpha+1}}$ and $\alpha$ single-bitflips that lead to ${\mathcal{H}_{\alpha-1}}$.

To estimate the rate of convergence, let us define ${z^{\alpha'}_{\alpha}:=\sum_{i \in \mathcal{H}_{\alpha'}} \sum_{j \in \mathcal{H}_{\alpha}} \rho_{ij}}$. After deriving the system of ODEs for these variables from the LME, we see that $z_0^0=\rho_{gg}$ is coupled to ${z^{\alpha+2k}_{\alpha}}$ \& ${z^{\alpha}_{\alpha+2k}}$ for ${k \in \mathbb{N}}$, ${0\leq\alpha+2k\leq n}$. The latter are coupled to ${z^{\alpha+4k}_{\alpha}}$ \& ${z^{\alpha}_{\alpha+4k}}$, respectively, which are further coupled to ${z^{\alpha+6k}_{\alpha}}$ \& ${z^{\alpha}_{\alpha+6k}}$ and so on. Hence, the system of ODEs coupled to $z_0^0$ consists of ${\frac{1}{2} (n^2+2n+2)}$ variables if $n$ is even and ${\frac{1}{2} (n+1)^2}$ variables, if $n$ is odd. The expression is very lengthy and thus, we state it in Appendix~\ref{sct:Quantum Continuous-Time Short-Range Dynamics} in Eq.~\ref{eqn:short_range_ODEs}.
\begin{figure}[t]
  \centering
  \subfloat[Short-range scaling]{\includegraphics{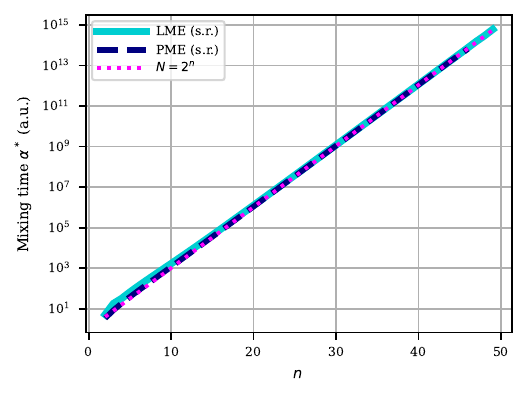}}\label{fig:short_range_scaling} \\
\subfloat[Comparison of dynamics]{\includegraphics{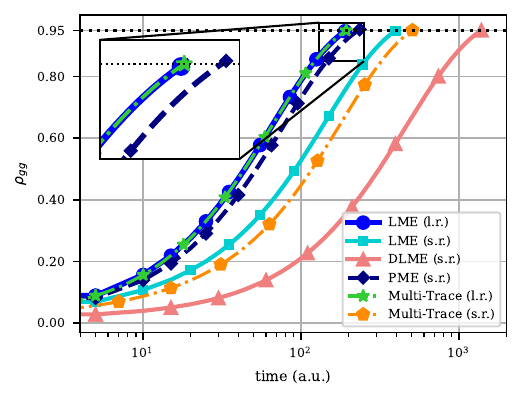}\label{fig:Short_Range_Comparison}}
  \caption{\justifying The scaling behavior of the mixing time $\alpha^*$ for short-range (s.r.) transitions (Section~\ref{sct:Single Bit-Flip Operators}) of the form ${A=\frac{1}{n} \sum_i X_{i}}$ are depicted in (a). They are calculated by taking the eigenvalue with the second-to-smallest real part of the coefficient matrices of Eqs.~\ref{eqn:short_range_pauli_ODEs} and~\ref{eqn:short_range_ODEs}. The mixing time of the LME is plotted alongside that of the classical PME. In (b), a comparison of the dynamics of the analyzed approaches is provided for a system size of ${n=6}$ and the approximate filter function in Eq.~\ref{eqn:gamma_of_omega}. The long-range (l.r.) approach was analyzed in Section~\ref{sct:A Single Projector} and the Multi-Trace dynamics are examined in Section~\ref{sct:Multi-Trace}. The simulation is implemented by directly calculating $e^{\mathcal{L}_L t}\rho(0)$, with $\rho(0)=\ket{s}\bra{s}$, and is interrupted at a ground-state overlap of $\rho_{gg}=0.95$.}
\end{figure}
The eigenvalue with the second largest real part of the reduced Fock-Liouville space and thus the convergence speed is evaluated numerically. The results can be found in Figure~\ref{fig:short_range_scaling} and indicate a scaling of $T_L = {\Theta(N)}$. For the single-bitflip coupling operator, we have $\norm{L}=\norm{A}$ for any $\eta$ and hence it is evident from Eq.~\ref{eqn:continuous_time_overall_complexity_lindblad} that the overall complexity in terms of the total Hamiltonian simulation time becomes $T_H=\tilde{\Theta}\left(N^2\right)$. Nevertheless, alternative implementations, such as the one presented in Ref.~\cite{ding2024simulating}, reduce the cost to a linear dependency in the mixing time (i.e., the overall complexity is given by ${\tilde{\mathcal{O}}\left(\norm{A}^2 T_L\right)=\tilde{\mathcal{O}}(N)}$), but are expected to necessitate fully fault-tolerant quantum computers.

Since we are not able to solve the ODEs analytically, we employ Lemma~\ref{lma:Convergence} to prove convergence to the ground state when starting in $\rho(0)=\ket{s}\bra{s}$. The proof is presented in Lemma~\ref{lma:short_range_convergence_proof}, which can be found in Appendix~\ref{sct:Quantum Continuous-Time Short-Range Convergence}.

An approach similar to the short-range coupling was analyzed in Ref.~\cite{Vogl_2010}, with the difference, that they shifted the energy levels of the Hamiltonian to obtain a Hamming ladder. The authors note that their ansatz leads to a logarithmic time complexity of $\mathcal{O}(\log N)$, which can also be achieved by the following classical greedy algorithm: For every bit in the configuration propose flipping it. Accept the bit flip only if the energy decreases. This way, the ground state is found after at most ${n=\log_2 N}$ bit flips for any initial configuration. Further, note that although the quantum Lindblad equations in Ref.~\cite{Vogl_2010} work for the Hamming ladder approach, they are not applicable in general. We derived the full equations and state them in Appendix~\ref{sct:Quantum Continuous-Time Short-Range Dynamics}, Eq.~\ref{eqn:short_range_ODEs}.

To derive the classical convergence time of the PME for the Grover Hamiltonian, we define the variables ${z_{\alpha}:=\sum_{i \in \mathcal{H}_{\alpha}} \mu_i}$. Then, the ODEs take the following form:
\begin{equation}
    \begin{aligned}
    \label{eqn:short_range_pauli_ODEs}
        \dot{z}_0 &= \eta  z_1 \\
        \dot{z}_1 &= \eta  (-n  z_1 + 2  z_2) \\
        \vdots \\
        \dot{z}_{\alpha} &= \eta \left((n-\alpha+1)  z_{\alpha-1} - n  z_{\alpha} + (\alpha+1)  z_{\alpha+1}\right) \\
        \vdots \\
        \dot{z}_n &= \eta  (z_{n-1} - n  z_n).
    \end{aligned}
\end{equation}
The scaling is again analyzed numerically by diagonalizing the coefficient matrix in Eq.~\ref{eqn:short_range_pauli_ODEs}. The results are provided in Figure \ref{fig:short_range_scaling} and strongly indicate a scaling of the mixing time of ${\Theta(N)}$ for $\eta=1/n$. Ref.~\cite{Allahverdyan2022} presented an approach yielding a complexity of $\mathcal{O}(\log_2 N)$ for the PME with long-range bounded transitions. This is achieved by introducing a shift in the energy spectrum of the Grover Hamiltonian, which is no larger than the spectral gap. Additionally, they argue that short-range transitions exhibit diffusive dynamics, resulting in convergence after a time of $\mathcal{O}\left(N^2\right)$. In contrast, our approach achieves a mixing time of $\Theta(N)$ even for the short-range regime.

Note that the quantum ODEs in Eq.~\ref{eqn:short_range_ODEs} and the classical ODEs of Eq.~\ref{eqn:short_range_pauli_ODEs} exhibit fundamentally different forms. Although Figure~\ref{fig:short_range_scaling} indicates that they approach each other asymptotically, we were not able to directly reduce the quantum equations to the classical ones. We hope that understanding this distinct behavior (and the one in the long-range ansatz in Eq.~\ref{eqn:single_projector_reduction_to_pauli}) will help identify problems that could benefit from dissipative approaches.

We further simulate the ground-state overlap numerically, starting in an equal superposition, up to $\mu_g=0.95$ in Figure~\ref{fig:Short_Range_Comparison} for most analyzed quantum and classical approaches with a system size of $n=6$ qubits. It is evident that the classical PME short-range dynamics deviate from the quantum LME short-range ansatz. Furthermore, note that the LME and discrete-time Multi-Trace dynamics behave equivalently for the long-range approach and deviate in the short-range regime, which provides numerical evidence supporting Proposition~\ref{prop:Linear_scaling}.

\subsection{Discrete-Time Random Walk}
\label{sct:Discrete-Time Random Walk}

We now analyze the more efficient discrete-time ground-state preparation algorithm introduced in Section~\ref{sct:Single-Ancilla Ground-State Preparation via Lindbladians} and presented in more detail in Appendix~\ref{sct:Discrete-Time Dynamics}. Here, we restrict our analysis to the long-range approach ${A=\ket{s}\bra{s}}$ already analyzed for the continuous-time dynamics in Section~\ref{sct:A Single Projector} and differentiate between tracing out multiple times (Multi-Trace) or only once at the end (Single-Trace).

\subsubsection{Multi-Trace}
\label{sct:Multi-Trace}

We prove in Appendix~\ref{sct:Discrete-Time Long-Range Approach_Appendix} that starting in ${\sigma_0 := \ket{s}\bra{s}}$, after $m$ steps of propagating via Eq.~\ref{eqn:sigma_lindblad_approximation}, we arrive at a density matrix of the form:
\begin{widetext}
\begin{equation}
\begin{aligned}
\label{eqn:single_projector_discrete_time_density_matrix}
    \sigma(m \tau) &= \Tr_a \left(e^{-i\tilde{L} \sqrt{\tau}} [\ket{0}_a\bra{0}_a \otimes \sigma_{m-1}(\tau)] e^{i\tilde{L} \sqrt{\tau}}\right) \\
    &=\ket{s}\bra{s} + \tilde{N} \left(\cos(\zeta)-1\right) \left(\sum_{k=0}^{m-1} \cos^k(\zeta)\right) \left(\ket{\tilde{s}}\bra{s} + \ket{s}\bra{\tilde{s}}\right) \\
    &\;\;\;\;+ \tilde{N}^2 \left(1-2 \cos^m(\zeta) + \cos^{2m}(\zeta)\right) \ket{\tilde{s}} \bra{\tilde{s}} + \tilde{N}^2 \left(1-\cos^{2m}(\zeta)\right) \ket{g} \bra{g}.
\end{aligned}
\end{equation}
\end{widetext}
Here, ${\ket{\tilde{s}} := \frac{1}{\sqrt{N-1}} \sum_{j \neq g} \ket{j}}$ is the uniform superposition state, excluding $\ket{g}$. Further, we defined ${\tilde{N} := \sqrt{\frac{N-1}{N}}}$ and ${\zeta := \eta \sqrt{N-1} \sqrt{\tau}}$. Hence, in step ${\mathcal{N}:=T/\tau}$, the system exhibits a ground-state overlap of:
\begin{equation}
\label{eqn:single_projector_discrete_time_ground_state_overlap}
\begin{aligned}
    \mu_{g}(\mathcal{N}, \tau) &= \bra{g} \sigma_{\mathcal{N}} \ket{g} \\
    &= \frac{1}{N} + \frac{N-1}{N} \left(1-\cos^{2\mathcal{N}}(\zeta)\right) \overset{!}{\geq} 1-\epsilon'.
\end{aligned}
\end{equation}
We estimate the complexity of the Multi-Trace approach by solving this inequality for $\mathcal{N}$ and multiply by $\sqrt{\tau}$ to yield a lower bound on the overall simulation time required for ground-state preparation:
\begin{equation}
\begin{aligned}
\label{eqn:ham_sim_time_discrete_quantum}
    &T=\mathcal{N} \sqrt{\tau} \gtrapprox - \frac{\log\left(\epsilon'^{-1}\right) \sqrt{\tau}}{2 \log\left(\cos\left(\eta \sqrt{N-1} \sqrt{\tau}\right)\right)} \\
    &= \log\left(\epsilon'^{-1}\right) \left(\frac{N^2}{(N-1)\sqrt{\tau}} - \frac{\sqrt{\tau}}{6} + \mathcal{O}\left(\frac{N-1}{N^2} \sqrt{\tau}^3\right)\right).
    \end{aligned}
\end{equation}
In the inequality, we approximate ${\frac{\epsilon'}{N-1} + \epsilon' \approx\epsilon'}$ and in the third step, we set $\eta=1/N$ to shift all complexity in the mixing time. We further use a Laurent series expansion of the term ${\log^{-1}\left(\cos\left(\frac{\sqrt{N-1}}{N}\sqrt{\tau}\right)\right)}$ at ${N=\infty}$. We can see that the discrete-time Lindblad dynamics scale as ${\tilde{\Theta}\left(N\right)}$ when setting $\sqrt{\tau}=1$, thus tracing out the ancilla qubit many times and fully shifting the complexity to $\mathcal{N}$.

Ref.~\cite{Wong2022} demonstrates that for the classical dynamics described by Eq.~\ref{eqn:classical_discrete_time_markov_chain}, the discrete-time CRW coupled by the graph Laplacian of an all-to-all connected graph with an absorbing vertex at $g$ results in:
\begin{equation}
\label{eqn:classical_discrete_time_ground_state_overlap}
    \mu_{g}(\mathcal{N}_C,1)=1-\frac{N-1}{N}\log\left(\frac{N-2}{N-1}\right)^{\mathcal{N}_C} \overset{!}{\geq} 1- \epsilon'.
\end{equation}
After solving for $\mathcal{N}_C$ and employing the Taylor series expansion, we see that ${{\mathcal{N}_C} = \Theta\left(N \log\left(\epsilon'^{-1}\right)\right)}$.

The quantum discrete-time long-range ground-state overlap from Eq.~\ref{eqn:single_projector_discrete_time_ground_state_overlap} is depicted in Figure \ref{fig:Short_Range_Comparison} for a system size of $n=6$ together with the continuous-time random walks. The former exhibits strong similarities to the continuous-time quantum approach. Therefore, the proposed early fault-tolerant discrete-time algorithm presented in Section~\ref{sct:Discrete-Time Dynamics} appears to not alter the mixing time significantly, which was to be expected from Proposition~\ref{prop:Linear_scaling}.

\subsubsection{Single-Trace}
\label{sct:Single-Trace}

Finally, we want to demonstrate that the quadratic Grover speedup can be recovered when the trace operation is conducted only once after an evolution time of $\tilde{\Theta}\left(\sqrt{N}\right)$. Here, we will additionally need to adapt the implementation procedure slightly to simulate the unitary evolution under $\tilde{L}$ for such a large time.

We can again make use of Eq.~\ref{eqn:ham_sim_time_discrete_quantum} to estimate the simulation time $T$. It is easy to see, that selecting $\mathcal{N}=\mathcal{O}(1)$ and $\sqrt{\tau}=\Theta\left(\sqrt{N}\right)$ minimizes $T$ to $\tilde{\Theta}\left(\sqrt{N}\right)$, which is quadratically faster than the classical dynamics in Eq.~\ref{eqn:classical_discrete_time_ground_state_overlap} and the continuous-time approaches in Section~\ref{sct:Continuous-Time Random Walk}.

To simulate $\tilde{L}$ for such a large time, we need to divide the unitary evolution into $r$ steps:
\begin{equation*}
    e^{i\tilde{L}\sqrt{\tau}} \approx \left(e^{i\tilde{L}\sqrt{\tau}/r}\right)^r.
\end{equation*}
The error of this approximation scheme is given by the sum of the error arising in the truncation \& discretization of the integral, and the error coming from the Trotter splitting. We will see later in Eq.~\ref{eqn:integral_truncation_and_discretization_error}, that we can neglect the former while the latter results in:
\begin{equation*}
\begin{aligned}
    \tnorm{\tilde{\chi}^r(\tau/r) - \tilde{\kappa}(\tau)} &= \mathcal{O}\left(\left(\mathcal{N}\sqrt{\tau} \norm{A}\right)^{p+2} r^{-(p+1)}\right) = \mathcal{O}(\epsilon),
\end{aligned}
\end{equation*}
where $\tilde{\kappa}(t)$ and $\tilde{\chi}(t)$ are defined by Eq.~\ref{eqn:discretized_integral_jump_operator_definition} and~\ref{eqn:final_approximated_density_matrix_chi} as the density matrices after approximating $\sigma(t)$ by truncation \& discretization of the integral and Trotterization, respectively. For a derivation of the first equation, we refer the reader to Appendix~\ref{sct:Trotter Splitting}. Note that here we additionally include the term $r^{-(p+1)}$ in the same manner as Ref.~\cite[Lemma 1]{Childs2021}. In the second equality, we choose ${r=\mathcal{O}\left((\mathcal{N} \sqrt{\tau} \norm{A})^{1+\frac{1}{p+1}} \epsilon^{-\frac{1}{p+1}}\right)}$ to achieve an accuracy of $\mathcal{O}(\epsilon)$ in our approximation scheme. The overall complexity in terms of the total Hamiltonian simulation time can hence be given as:
\begin{equation*}
\begin{aligned}
    T_H&=\Theta\left(\mathcal{N}\sqrt{\tau} \, r\right)=\tilde{\Theta}\left(\left(\norm{A}T_H\right)^{1+o(1)}\epsilon^{-o(1)}\right) \\
    &=\tilde{\Theta}\left(\sqrt{N}^{1+o(1)} \epsilon^{-o(1)}\right).
\end{aligned}
\end{equation*}
The dependency in $\epsilon$ can further be reduced to an additive logarithmic order via QSVT~\cite{Childs2018}. We see that the quadratic Grover speedup can be recovered, when using a single final trace-out operation. Eq.~\ref{eqn:ham_sim_time_discrete_quantum} further shows that each additional trace-out operation slows convergence, thereby illustrating the crucial role of coherence in Grover's algorithm. Note that the formula for the ground-state overlap $\mu_g$ in Eq.~\ref{eqn:single_projector_discrete_time_ground_state_overlap} for $m=1$ follows a similar structure as for an analog quantum random walk with $H=\ket{g}\bra{g} + \ket{s}\bra{s}$~\cite{PhysRevA.57.2403}:
\begin{equation*}
    \mu_g(1,T)=\frac{1}{N} + \frac{N-1}{N} \sin^2\left(\frac{\sqrt{N-1}}{N} T\right).
\end{equation*}
Therefore, we have deepened the understanding of the role of coherence in random walks by establishing a method to transition from continuous-time OQRWs to UQRWs. Further note that this leads to a different implementation of achieving a quadratic speedup through the time-evolution of a dilated jump operator $\tilde{L}$. This could open an exciting new research direction in the field of continuous-time UQRWs, which remains an active area of investigation~\cite{PhysRevLett.129.160502}. Note that the analysis and proof of convergence for the short-range ansatz (Section~\ref{sct:Single Bit-Flip Operators}) paves the way to extending this result. This can lead to a simpler implementation scheme utilizing a sum of single Pauli-$X$ gates as coupling operator.

\section{Conclusions and Future Research}
\label{Conclusions and Future Research}

This paper provides an intuitive presentation and a rigorous analysis of an early fault-tolerant quantum algorithm for ground-state preparation via Lindbladians. We demonstrated in Proposition~\ref{prop:Linear_scaling} that by employing a high-order Trotterization, we can improve the complexity of implementation for some instances to be linear in the mixing time and still follow the Lindblad dynamics. This can be valuable for simulating the LME or preparing ground-states via Lindbladians with limited quantum resources, which is of considerable interest in the community~\cite{ding2024simulating, cleve_et_al:LIPIcs.ICALP.2017.17, li2023simulating}. Then, we evaluate approaches that prepare ground states with the analyzed algorithm. Our findings provide strong evidence that in a purely dissipative continuous-time setting, a superlinear speedup over classical search is not possible. Thus, we demonstrate that the previously observed quadratic speedup~\cite{Vogl_2010} arises not in comparison to the CRW, but relative to the unnormalized DLME. This leads to well-known physical phenomena, such as Dicke superradiance.

We further reduce the quantum LME to the classical PME in the ansatz that is inspired by the ETH (Section~\ref{sct:Convergence under the Eigenstate Thermalization Hypothesis}) and show that the classical and quantum ODEs differ from each other in the long- and short-range regimes (Sections~\ref{sct:A Single Projector} and~\ref{sct:Single Bit-Flip Operators}, respectively). We aim to have thereby deepened the foundational insights into dissipative dynamics for ground-state preparation. Initially, analogous deviations from the classical dynamics were observed in the quantum adiabatic algorithm~\cite{farhi2000quantum}, which then proved useful for many problems when the evolution path is properly controlled~\cite{Avron2010}.

Nevertheless, we are able to recover a quadratic speedup in the unitary limit, meaning that we trace-out the ancilla qubit only once at the end of the evolution. On the one hand, this leads to a different way of implementing continuous-time UQWRs via the dilated jump operator $\tilde{L}$, and on the other hand reveals the critical role of coherence for the quadratic Grover speedup, a topic that has been explored extensively in previous studies~\cite{Sun2024, Rastegin2018_1, Rastegin2018_2} and establishes a transition from OQRWs to UQRWs.

A noteworthy finding for future investigations is that when dephasing is considered, there is a unique schedule which maximizes the fidelity with a target state in adiabatic evolution~\cite{Avron2010}. Ref.~\cite{King2024} analyzed this behavior with the goal of finding shortcuts to adiabaticity. Hence, future research could explore the combination of unitary and dissipative dynamics, incorporating potentially time-dependent Lindbladians as $\mathcal{L}(t):=\mathcal{L}_H(t)+\mathcal{L}_L(t)$.

\section*{Acknowledgements}

P.J.E and J.R.F were partially funded by the German BMWK project QCHALLenge (01MQ22008B and 01MQ22008D, respectively). The authors thank Lin Lin and Pedro Hack for insightful discussions.

\appendix

\section{Single-Ancilla Ground-State Preparation via Lindbladians}
\label{sct:Single-Ancilla Ground-State Preparation via Lindbladians_Appendix}

This section presents an overview of the algorithm introduced in Ref.~\cite{Ding2023-rv} to simulate open-system dynamics on an early fault-tolerant digital quantum computer. We aim to provide a concise summary, emphasizing the intuitive understanding behind each building block of the algorithm. Furthermore, we highlight its applicability to prepare ground states of Hamiltonians.

Remember from Section~\ref{sct:Single-Ancilla Ground-State Preparation via Lindbladians} that the goal of the algorithm is to simulate $\rho(t)$ in the purely-dissipative Lindblad dynamics with one jump operator $L$:
\begin{equation}
\label{eqn:ground_state_lindblad_equation_Appendix}
    \frac{\textrm{d}\rho}{\textrm{d}t} = \mathcal{L}_L[\rho] = L \rho L^\dag - \frac{1}{2} \{L^\dag L, \rho\}.
\end{equation}
Since Eq.~\ref{eqn:ground_state_lindblad_equation_Appendix} admits a formal solution ${\rho(t)=e^{\mathcal{L}t}[\rho(0)]}$, we are aiming to implement action of the non-unitary operator $e^{\mathcal{L}t}$ on a digital quantum device. Therefore, the following three foundational building blocks are required. First we need to find an approximation of $e^{\mathcal{L}t}$ that can be directly implemented on a quantum computer. This can be achieved by combining a unitary time-evolution of the dilated jump operator $\tilde{L}$ from Eq.~\ref{eqn:L_tilde_definition} that block encodes $L$, with non-unitary measurement operations on the ancilla qubit that is used for the block encoding. Then, to avoid the expensive time evolution under $\tilde{L}$, we need to truncate and discretize the integral from Eq.~\ref{eqn:jumperator_L_definition}, which can be done by means of the trapezoidal rule. Lastly, to implement the unitary evolution of the resulting sum of Hermitian operators, where each constituent contributes to the discretized integral, we make use of a Trotter splitting.

In the following Subsections~\ref{sct:Approximating the Dissipative Part} -- \ref{sct:Trotter Splitting}, we will briefly outline the details of each step and the approximation error with respect to Eq.~\ref{eqn:ground_state_lindblad_equation_Appendix}. Then, we sum up the errors of the three building blocks to verify the functionality of the scheme and thereby estimate the implementation cost. This is presented together with the proof of Proposition~\ref{prop:Linear_scaling} in Section~\ref{sct:Error Analysis}. Lastly, the discrete-time implementation which is optimized for ground-state preparation is summarized in Section~\ref{sct:Discrete-Time Dynamics}. A more thorough analysis of the presented algorithms may be found in the original paper~\cite{Ding2023-rv}.

\subsection{Approximating the Dissipative Part}
\label{sct:Approximating the Dissipative Part}

As outlined above, this subsection aims to find an approximation to $e^{\mathcal{L}t}$ that can be implemented on a digital quantum computer. First, we state again the dilated Hermitian jump operator that block-encodes $L$:
\begin{equation*}
\begin{aligned}
    \tilde{L} = \begin{pmatrix} 0 & L^\dag \\ L & 0 \end{pmatrix}.
\end{aligned}
\end{equation*}
Further, let ${\Tr_a \left( \sum_{i,j = 0}^1 \ket{i}_a \bra{j}_a \otimes \rho_{ij}\right) = \sum_{i=0}^1 \rho_{ii}}$ denote the operation that traces out the subspace of the ancilla qubit, which is used for the block-encoding. To approximate the Lindblad dynamics as governed by Eq.~\ref{eqn:ground_state_lindblad_equation_Appendix}, we need to repeatedly employ a dilated Hamiltonian simulation of $e^{-i\tilde{L}t}$, then trace out the ancilla qubit and reset it to the state $\ket{0}_a$. The \mbox{$m$-th} iteration of the algorithm, propagating the system by a time step $\sqrt{\tau}$, is thus defined as:
\begin{equation}
\label{eqn:sigma_lindblad_approximation_Appendix}
\begin{aligned}
    \sigma(m\tau) &:= \sigma_m(\tau) \\
    &:= \Tr_a \left( e^{-i\tilde{L}\sqrt{\tau}} [\ket{0}_a \bra{0}_a \otimes \sigma_{m-1}(\tau)] e^{i \tilde{L} \sqrt{\tau}}\right),
\end{aligned}
\end{equation}
followed by resetting the ancilla to $\ket{0}_a$. The error ${\norm{\rho(t) - \sigma(t)}_1}$ of the simulation scheme, where $\rho(t)$ is defined via Eq.~\ref{eqn:ground_state_lindblad_equation_Appendix}, can be estimated as follows. Let ${Y_{3,\tilde{L}} = 1-i\tilde{L} \sqrt{t} + \tilde{L}^2 t/2 + \tilde{L}^3 t^{3/2}/6}$ denote the Taylor expansion of ${e^{-i\tilde{L}\sqrt{t}}}$ truncated at third order. We can then use the triangle inequality to obtain:
\begin{equation}
\begin{aligned}
\label{eqn:approximating_dissipative_part_error}
    &\norm{\rho(t) - \sigma(t)}_1 \leq \norm{\rho(t) - (1+\mathcal{L} t) [\rho(0)]}_1 \\
    & + \norm{(1+\mathcal{L}t) [\rho(0)] - \Tr_a \left( Y_{3,\tilde{L}} [\ket{0}_a \bra{0}_a \otimes \rho(0)] Y_{3,\tilde{L}}^\dag\right)}_1 \\
    & + \norm{\Tr_a \left( Y_{3,\tilde{L}} [\ket{0}_a \bra{0}_a \otimes \rho(0)] Y_{3,\tilde{L}}^\dag \right) - \sigma(t)}_1 \\
    &= \mathcal{O} (\norm{L}^4 t^2).
\end{aligned}
\end{equation}
The first term was evaluated by employing the Taylor expansion
\begin{equation*}
    \rho(t) = e^{\mathcal{L} t} [\rho(0)] = (1+\mathcal{L}t)[\rho(0)] + \mathcal{O}\left(\norm{\mathcal{L}}^2 t^2\right).
\end{equation*}
We finally use ${\norm{\mathcal{L}} = \mathcal{O}\left(\norm{L}^2\right)}$ to arrive at the stated error bound. A detailed proof can be found in Ref.~\cite[Appendix B]{cleve_et_al:LIPIcs.ICALP.2017.17}. By a direct calculation, it becomes apparent that the second and third terms exhibit identical scaling. Therefore, note that all factors with a fractional power of the form ${t^{1/2}}$ vanish due to the partial trace operation, since $\Tr_a \tilde{L} = 0$.

\subsection{Truncation and Discretization of the Integral}
\label{sct:Truncation and Discretization of the Integral}

To efficiently encode $\sigma(\tau)$ from Eq.~\ref{eqn:sigma_lindblad_approximation_Appendix}, we need to find an approximation of $e^{-i\tilde{L}t}$. Ref.~\cite{Ding2023-rv} suggests truncating and discretizing the integral from Eq.~\ref{eqn:jumperator_L_definition}. We will see later (Section~\ref{sct:Trotter Splitting}), that the resulting evolution can be implemented directly via Trotterization using only a single ancilla qubit. We want to emphasize that this step is essential to avoid the expensive evolution under a block-encoding of $L$.

According to Ref.~\cite[Lemma 8]{Ding2023-rv}, the filter function ${\gamma(s)}$ decays super-polynomially as ${|s| \rightarrow \infty}$ if we choose ${\hat{\gamma}(\omega)}$ in such a way that it fulfills the following three assumptions:
\begin{equation}
\begin{aligned}
\label{eqn:conditions_on_gamma}
    &0 \leq \hat{\gamma}(\omega) \leq 1\;, \\
    \hat{\gamma}(\omega) = &\begin{cases} \Omega(1) & \text{, if $\omega = -\Delta$} \\ 0 & \text{, if $\omega \notin [-M_{\omega}, 0]$}
    \end{cases}, \\
    &\abs{\frac{\textrm{d}^Z \hat{\gamma}(\omega)}{\textrm{d}\omega^Z}} \leq C \Delta^{-Z}.
\end{aligned}
\end{equation}
\begin{figure}[t]
  \centering
\subfloat[Ideal filter function\label{fig:gamma_w_and_s_a}]{
 \includegraphics{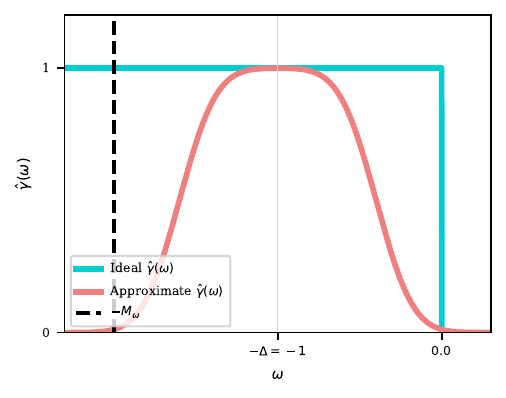}
} \\
\subfloat[Approximate filter function\label{fig:gamma_w_and_s_b}]{
  \includegraphics{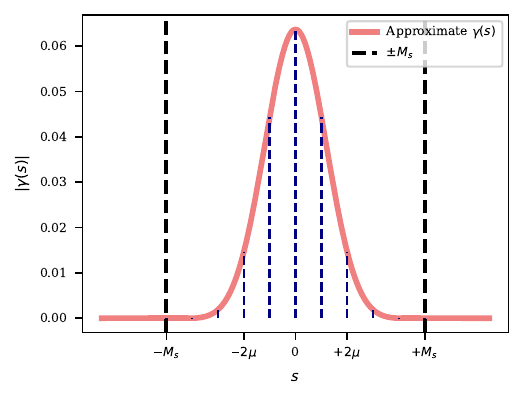}
}
  \caption{\justifying An ideal and an approximate (Eq.~\ref{eqn:gamma_of_omega}) filter function $\hat{\gamma}(\omega)$, suitable for ground-state preparation via Lindbladians for the Grover Hamiltonian with spectral gap $\Delta=1$ is shown in (a). Its inverse Fourier transform $\gamma(s)$ from Eq.~\ref{eqn:gamma_of_s} is depicted in (b). The vertical lines at ${s = \pm M_s}$ indicate where the integral is truncated and the thin lines illustrate an exemplary discretization with a step width of $\mu$.}
  \label{fig:gamma_w_and_s}
\end{figure}
Here, $C, Z > 0$, and $M_{\omega}$ is a hyperparameter of the algorithm. ${\Delta := \lambda_1 - \lambda_0}$ is the spectral gap of the Hamiltonian $H$, denoting the difference between the ground state and first excited state. When $\hat{\gamma}(\omega)$ is zero, it suppresses all transitions; when it is one, it allows all transitions. This explains the first condition. The second condition requires the value of the filter function to be high within a specific range to enable transitions between the ground state and the first excited eigenstates, and zero outside the range $[-M_\omega,0]$. Here, $M_\omega$ must be chosen carefully to both facilitate the transition of amplitude from eigenstates $\lambda_j$ to lower-energy eigenstates $\lambda_i$ with ${\lambda_i - \lambda_j \leq -M_{\omega}}$ and to keep the cost of implementation by truncating the integral low. The last condition in Eq.~\ref{eqn:conditions_on_gamma} helps control the error of the trapezoidal rule. Figure~\ref{fig:gamma_w_and_s} provides an example of both an ideal filter function and an approximate one suitable for ground-state preparation via Lindbladians for the Grover Hamiltonian with spectral gap $\Delta=1$. For our analytical calculations in Section~\ref{sct:Results for Grover's Search}, we assumed a perfect filter function and for the numerical simulations in Figure~\ref{fig:Short_Range_Comparison}, we chose:
\begin{equation}
\label{eqn:gamma_of_omega}
    \hat{\gamma}(\omega)=\frac{1}{2} \left(\erf\left(4\,\omega + 4.8\right) - \erf\left(4\,\omega+3.2\right)\right),
\end{equation}
where $\erf(\omega) = \frac{2}{\sqrt{\pi}}\int_0^\omega e^{-x^2}$ denotes the error function. Its inverse Fourier transform ${\gamma(s)=\int_{\mathbb{R}} \hat{\gamma}(\omega)e^{-i\omega s} \mathrm{d} \omega}$ is given by
\begin{equation}
\label{eqn:gamma_of_s}
    \gamma(s)=\frac{e^{s^2/16}e^{1.2 \,i s} + e^{s^2 / 16} e^{0.8\, i s}}{2 \pi i s}.
\end{equation}
Note that this function is only suitable for the ETH (Section~\ref{sct:Convergence under the Eigenstate Thermalization Hypothesis}) and long-range (Section~\ref{sct:A Single Projector}) approaches. In the short-range regime (Section~\ref{sct:Single Bit-Flip Operators}), we require $\hat{\gamma}(0)=1$.

If we now restrict the integration range of the integral in Eq.~\ref{eqn:jumperator_L_definition} to ${[-M_s, M_s]}$, and use a uniform grid ${\{\ell \mu \mid \ell=-M_\mu, -M_\mu + 1, ..., M_\mu\}}$, and a step width of ${\mu=M_s/M_\mu}$, we arrive at the discrete version of $L$:
\begin{equation}
\label{eqn:discretized_integral_jump_operator_definition}
    L_s := \sum_{\ell=-M_\mu}^{M_\mu} \gamma(\ell \mu) A(\ell \mu) \mu.
\end{equation}
According to \mbox{Ref.~\cite[Lemma 11]{Ding2023-rv}}, when given any constant $Z>0$ and error $\epsilon_s > 0$, if we choose
\begin{align}
\label{eqn:M_s_tau_s}
\begin{split}
    M_s &= \Omega\left(\frac{1}{\Delta} \left(\frac{C M_{\omega} \norm{A}}{\Delta \epsilon_s}\right)^{\frac{1}{Z-1}}\right), \\
    \mu &= \mathcal{O} \left(\left(||H|| + M_{\omega} + \log\left(\frac{C M_{\omega} \norm{A}}{\Delta \epsilon_s}\right)\right)^{-1}\right),
\end{split}
\end{align}
then, we get:
\begin{equation*}
    \norm{L - L_s} = \mathcal{O}(\epsilon_s).
\end{equation*}
By substituting $\tilde{L}$ with its discretized version $\tilde{L}_s$, we arrive at a new approximation for the density matrix: ${\kappa(\tau) := \Tr_a \left( e^{-i\tilde{L}_s\sqrt{\tau}} [\ket{0}_a \bra{0}_a \otimes \rho(0)] e^{i \tilde{L}_s \sqrt{\tau}}\right)}$. Note that the triangle inequality implies
\begin{equation*}
\norm{\rho(\tau) - \kappa(\tau)}_1 \leq \norm{\rho(\tau) - \sigma(\tau)}_1 + \norm{\sigma(\tau) - \kappa(\tau)}_1,
\end{equation*}
where $\sigma(\tau)$ denotes the approximation of the density matrix $\rho(\tau)$ from Eq.~\ref{eqn:sigma_lindblad_approximation_Appendix}. The first term was estimated above in Eq.~\ref{eqn:approximating_dissipative_part_error}. Therefore, our next task is to estimate the second term, leading to:
\begin{equation}
\label{eqn:integral_truncation_and_discretization_error}
    \norm{\sigma(\tau) - \kappa(\tau)}_1 = \mathcal{O}\left( \tnorm{\tilde{L} - \tilde{L}_s} \left(\tnorm{\tilde{L}_s} + \tnorm{\tilde{L}}\right) \tau \right) = \mathcal{O}\left(\epsilon\right).
\end{equation}
In the second equality, we let $Z\rightarrow\infty$ and use the fact that $\mu$ has a logarithmic dependency in $\epsilon_s$, such that we can set ${\epsilon_s = \mathcal{O}\left(\frac{\epsilon}{\left(\tnorm{\tilde{L}_s} + \tnorm{\tilde{L}}\right)\tau}\right)}$. For detailed proofs, we refer the reader to \mbox{Ref.~\cite[Proposition 17]{Ding2023-rv}}.
At this stage, the dilation takes the following form:
\begin{widetext}
\begin{equation*}
    \tilde{L} \approx \tilde{L}_s = \begin{pmatrix}
        0 & \left(\sum_{\ell=-M_\mu}^{M_\mu} \gamma(\ell\mu) A(\ell\mu) \mu_{\ell}\right)^\dag \\
        \sum_{\ell=-M_\mu}^{M_\mu} \gamma(\ell\mu) A(\ell\mu) \mu_{\ell} & 0
    \end{pmatrix} =: \sum_{\ell=-M_\mu}^{M_\mu} \tilde{H}_{\ell}.
\end{equation*}
\end{widetext}
After defining ${\tilde{\sigma}_{\ell} := \mu_{\ell} \left(\sigma_x \Re \gamma(\ell \mu) + \sigma_y \Im \gamma(\ell \mu)\right)}$, where $\sigma_\alpha$ are the Pauli matrices, we see that each term factorizes as:
\begin{equation*}
    \Tilde{H}_{\ell} = \begin{pmatrix}
        0 & \gamma^*(\ell\mu) A(\ell \mu) \mu_{\ell} \\
        \gamma(\ell\mu) A(\ell \mu) \mu_{\ell} & 0
    \end{pmatrix} = \tilde{\sigma}_{\ell} \otimes A(\ell \mu).
\end{equation*}
\subsection{Trotter Splitting}
\label{sct:Trotter Splitting}

The next step consists of the Trotterization of the Hamiltonian evolution ${e^{-i\tilde{L}_s \sqrt{\tau}} = e^{-i\sum_{\ell} \tilde{H}_{\ell} \sqrt{\tau}}}$. Ref.~\cite{Ding2023-rv} chooses a second-order Trotter scheme to find the right balance between efficiency and accuracy. However, we argue that when Trotterizing up to \mbox{$p$-th} order one can achieve a significantly better scaling for certain problem structures. We will come back to this statement in the following subsection. A \mbox{$p$-th} order Trotterized evolution is of the form~\cite{Childs2021}:
\begin{equation}
\label{eqn:unitary_time_evolution_after_trotterization}
    W(\tau) := \prod_{\tilde{\nu}=1}^{\tilde{N}} \prod_{\tilde{\gamma}=1}^{\tilde{\Gamma}} e^{-i b_{(\tilde{\nu},\tilde{\gamma})} \tilde{H}_{\pi_{\tilde{\nu}}(\tilde{\gamma})} \sqrt{\tau}}.
\end{equation}
The coefficients ${b_{(\tilde{\nu},\tilde{\gamma})}}$ are real numbers. For the Suzuki formula, ${\tilde{N}=2 \cdot 5^{p/2-1}}$ denotes the number of stages for even $p$. The permutation $\pi_{\tilde{\nu}}$ controls the ordering of operator summands within stage $\tilde{\nu}$. Hence, the new approximation $\chi(\tau)$ of the density matrix $\kappa(\tau)$ reads:
\begin{equation}
\label{eqn:final_approximated_density_matrix_chi}
    \chi(\tau):=\Tr_a\left(W(\tau)[\ket{0}_a\bra{0}_a \otimes \rho(0)]\left(W(\tau)\right)^{\dag}\right).
\end{equation}
Therefore, the circuit comprises implementations of alternating Hamiltonian evolutions and controlled-$A$ gates:
\begin{equation}
\label{eqn:what_has_to_be_implemented}
\begin{aligned}
    &e^{-i b_{\ell} \tilde{H}_{\ell} \sqrt{\tau}} \\
    &= \underbrace{\left(I \otimes e^{iH\ell\mu}\right)}_{\text{Ham. evol.}} \underbrace{\exp\left(-i b_{\ell} (\tilde{\sigma}_{\ell} \otimes A) \sqrt{\tau}\right)}_{\text{controlled-A}} \underbrace{\left(I \otimes e^{-iH\ell\mu}\right)}_{\text{Ham. evol.}}.
    \end{aligned}
\end{equation}
A high-level circuit diagram is provided in Figure \ref{fig:circuit_illustration}. Ref.~\cite{Ding2023-rv} gives further details on how to optimize ${\chi(\tau)}$ by canceling out back-and-forth Hamiltonian evolution, reducing the Hamiltonian simulation time of each Trotter term from ${\ell \mu}$ to $\mu$. We saw in Section~\ref{sct:Single Bit-Flip Operators} that $A$ can be as easy as a sum of single Pauli operators. Hence, these circuits can be way simpler than a time-evolution under a block-encoding of $\tilde{L}$.

We again make use of the triangle inequality to estimate the overall error of approximating $\rho(\tau)$ by $\chi(\tau)$, which leads to:
\begin{equation*}
\begin{aligned}
    \norm{\rho(\tau) - \chi(\tau)}_1 &\leq \norm{\rho(\tau) - \sigma(\tau)}_1 \\
    &+ \norm{\sigma(\tau) - \kappa(\tau)}_1
+ \norm{\kappa(\tau) - \chi(\tau)}_1.
\end{aligned}
\end{equation*}
The first two terms were estimated in Eq.~\ref{eqn:approximating_dissipative_part_error} and~\ref{eqn:integral_truncation_and_discretization_error}. The error in trace distance of the \mbox{$p$-th} order Trotterization $\chi(\tau)$ to the density matrix $\kappa(\tau)$ after discretizing the integral reads:
\begin{figure}[t]
  \centering
\begingroup%
  \makeatletter%
  \providecommand\color[2][]{%
    \errmessage{(Inkscape) Color is used for the text in Inkscape, but the package 'color.sty' is not loaded}%
    \renewcommand\color[2][]{}%
  }%
  \providecommand\transparent[1]{%
    \errmessage{(Inkscape) Transparency is used (non-zero) for the text in Inkscape, but the package 'transparent.sty' is not loaded}%
    \renewcommand\transparent[1]{}%
  }%
  \providecommand\rotatebox[2]{#2}%
  \newcommand*\fsize{\dimexpr\f@size pt\relax}%
  \newcommand*\lineheight[1]{\fontsize{\fsize}{#1\fsize}\selectfont}%
  \ifx\svgwidth\undefined%
    \setlength{\unitlength}{245.99996948bp}%
    \ifx\svgscale\undefined%
      \relax%
    \else%
      \setlength{\unitlength}{\unitlength * \real{\svgscale}}%
    \fi%
  \else%
    \setlength{\unitlength}{\svgwidth}%
  \fi%
  \global\let\svgwidth\undefined%
  \global\let\svgscale\undefined%
  \makeatother%
  \begin{picture}(1,0.55779267)%
    \lineheight{1}%
    \setlength\tabcolsep{0pt}%
    \put(0,0){\includegraphics[width=\unitlength,page=1]{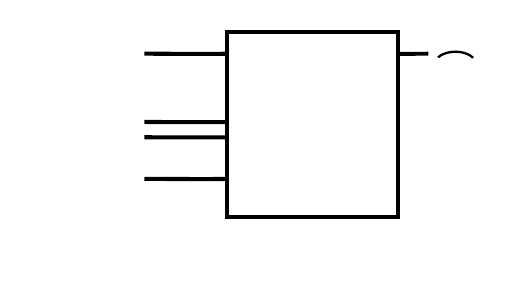}}%
    \put(0.35135851,0.26245415){\color[rgb]{0,0,0}\makebox(0,0)[t]{\lineheight{0.30000001}\smash{\begin{tabular}[t]{c}\textbf{.}\\\textbf{.}\\\textbf{.}\end{tabular}}}}%
    \put(0,0){\includegraphics[width=\unitlength,page=2]{circuit_illustration.pdf}}%
    \put(0.88552634,0.26295152){\color[rgb]{0,0,0}\makebox(0,0)[t]{\lineheight{0.30000001}\smash{\begin{tabular}[t]{c}\textbf{.}\\\textbf{.}\\\textbf{.}\end{tabular}}}}%
    \put(0.70274196,0.01275725){\color[rgb]{0,0.80784314,0.81960784}\makebox(0,0)[t]{\lineheight{1.25}\smash{\begin{tabular}[t]{c}\textbf{Repeat}\end{tabular}}}}%
    \put(0,0){\includegraphics[width=\unitlength,page=3]{circuit_illustration.pdf}}%
    \put(0.6097935,0.29988247){\color[rgb]{0,0,0}\makebox(0,0)[t]{\lineheight{1.25}\smash{\begin{tabular}[t]{c}$W(\tau)$\end{tabular}}}}%
    \put(0.16561963,0.24885395){\color[rgb]{0,0,0}\makebox(0,0)[t]{\lineheight{1.25}\smash{\begin{tabular}[t]{c}$\ket{\Psi}$\end{tabular}}}}%
    \put(0.1686676,0.4376164){\color[rgb]{0,0,0}\makebox(0,0)[t]{\lineheight{1.25}\smash{\begin{tabular}[t]{c}$\ket{0}$\end{tabular}}}}%
  \end{picture}%
\endgroup%

  \caption{\justifying A high-level circuit diagram for the presented implementation of the Lindblad dynamics. Note that the outcome of the measurement on the ancilla qubit can be discarded.}\label{fig:circuit_illustration}
\end{figure}
\begin{equation}
\begin{aligned}
\label{eqn:error_of_trotterization}
    &\norm{\kappa(\tau) - \chi(\tau)}_1 = \mathcal{O}\left(\left(\tau^{1/2} \sum_{\ell} \tnorm{\tilde{H}_{\ell}}\right)^{p+2}\right) \\ &= \mathcal{O}\left(\left(\tau^{1/2} \norm{A} \mu\right)^{p+2} \left(\sum_{\ell} \abs{\gamma(\ell\mu)}\right)^{p+2}\right) \\
    &= \mathcal{O}\left(\tau^{p/2+1} \norm{A}^{p+2}\right).
\end{aligned}
\end{equation}
A precise derivation of Trotter splitting errors, clarifying the first equality, can be found in \mbox{Ref.~\cite[Lemma 1]{Childs2021}}. Ref.~\cite{Ding2023-rv} extends their findings by noticing that due to the partial trace operation, all terms with fractional powers of the form $\tau^{3/2}$ vanish (since ${\Tr_a \tilde{L}=0}$), and hence, the exponent of the error-term is $p+2$ instead of $p+1$. In the last equality, we make use of \mbox{Ref.~\cite[Lemma 8]{Ding2023-rv}} to neglect ${\left(\sum_{\ell} \abs{\gamma(\ell\mu)}\right)^{p+2}}$. We further assume ${\norm{H} = \mathcal{O}(1)}$ (thus, also ${M_{\omega} = \mathcal{O}(1)}$) and ${\Delta = \mathcal{O}(1)}$, as we investigate the Hamiltonian of an unstructured search space from Eq.~\ref{eqn:grover_hamiltonian} and hence also neglect $\mu$ in the last equality. An analysis that includes these parameters is provided in \mbox{Ref.~\cite[Theorem 13]{Ding2023-rv}}.

\subsection{Error Analysis}
\label{sct:Error Analysis}

In this section, we sum up the error terms of the approximations in Sections~\ref{sct:Approximating the Dissipative Part} -- \ref{sct:Trotter Splitting} and elaborate on the choice of the order of Trotterization, thereby refining the original implementation of Ref.~\cite{Ding2023-rv}.

Given a stopping time ${T>0}$ and accuracy ${\epsilon > 0}$, the goal is to determine the time step $\tau$, such that the error of simulating Eq.~\ref{eqn:ground_state_lindblad_equation_Appendix} by Eq.~\ref{eqn:unitary_time_evolution_after_trotterization} becomes arbitrarily small: ${\norm{\rho(T) - \chi(T)}_1 \leq \epsilon}$. We choose $M_s$ and $\mu$ as stated in Eq.~\ref{eqn:M_s_tau_s}. Originally, the overall complexity was given in terms of the total Hamiltonian simulation time $T_H$ and the number $N_A$ of controlled-$A$ gates, which arise from Eq.~\ref{eqn:what_has_to_be_implemented}~\cite{Ding2023-rv}. Here, we will only state the former as we expect the complexity to be equivalent in the case of a Grover Hamiltonian, since ${N_A = \mathcal{O}\left(T_{H} \mu^{-1}\right) = \mathcal{O}\left(T_H (\norm{H} + M_\omega)^{-1}\right) = \mathcal{O}(T_H)}$.

Using that the Lindblad dynamics are contractive in trace distance for arbitrary density matrices $\tilde{\rho}_1$ and $\tilde{\rho}_2$, we have~\cite{RUSKAI1994}:
\begin{equation*}
    \norm{e^{\mathcal{L}t} \tilde{\rho}_1 - e^{\mathcal{L}t} \tilde{\rho}_2}_1 \leq \tnorm{\tilde{\rho}_1 - \tilde{\rho}_2}_1.
\end{equation*}
Thus, it suffices to estimate the error after the first time step $\tau$ and then multiply it by the number ${\mathcal{N} = T/\tau}$ of employed steps:
\begin{equation}
\label{eqn:overall_error_sum_and_splitup}
\begin{aligned}
    \norm{\rho(T) - \chi(T)}_1 \leq & \mathcal{N} \norm{\rho(\tau) - \chi(\tau)}_1 \\
    \leq & \mathcal{N} \left( \norm{\rho(\tau) - \sigma(\tau)}_1 + \norm{\sigma(\tau) - \kappa(\tau)}_1 \right. \\
    & \left. + \norm{\kappa(\tau) - \chi(\tau)}_1 \right)
\end{aligned}
\end{equation}
In the second inequality, we split the overall error into the Lindblad simulation error, the error from numerical integration, and the error from Trotter splitting by using the triangle inequality. Plugging inequalities \ref{eqn:approximating_dissipative_part_error}, \ref{eqn:integral_truncation_and_discretization_error}, and \ref{eqn:error_of_trotterization} into Eq.~\ref{eqn:overall_error_sum_and_splitup}, we get:
\begin{equation*}
\begin{aligned}
    &\norm{\rho(T) - \chi(T)}_1 \\
    &= \mathcal{O}\left(\norm{L}^4 \tau T\right) + \mathcal{O}\left(\epsilon\right) + \mathcal{O}\left(\tau^{p/2} T \norm{A}^{p+2}\right) = \mathcal{O}\left(\epsilon\right).
\end{aligned}
\end{equation*}
The second equality holds if we choose $\tau$ as:
\begin{equation*}
    \tau = \mathcal{O}\left(\min \left(
    \norm{L}^{-4} T^{-1} \epsilon, 
    \norm{A}^{-2-4/p} \, T^{-2/p} \, \epsilon^{2/p}\right)\right).
\end{equation*}
The overall complexity in terms of the total Hamiltonian simulation time $T_H$ can now be given by:
\begin{equation}
\begin{aligned}
\label{eqn:continuous_time_overall_complexity_lindblad_Appendix}
    T_H &= \Theta\left(\frac{T}{\tau}\right) \\
    &= \tilde{\Theta}\left(\max\left(\norm{L}^4 T^2 \epsilon^{-1}, \norm{A}^{2+o(1)} T^{2+o(1)} \epsilon^{-o(1)}\right)\right).
\end{aligned}
\end{equation}
We now come to prove Proposition~\ref{prop:Linear_scaling}, which states that this outcome can be refined to a linear dependency in $T$ while still following the Lindblad dynamics when preparing the ground state of the Hamiltonian of an unstructured search space.
\begin{namedtheorem}
    To find the ground state of an unstructured search space with $M$ marked elements, the overall complexity of implementation in terms of the total Hamiltonian simulation time $T_H$ of the approximated continuous-time dynamics scales linearly in the mixing time. This linear scaling is achieved by choosing the coupling operator as $A:=J_N/\sqrt{(N-M)M}$, where $J_N$ denotes the all-one matrix of size $N\times N$ and the filter function $\hat{\gamma}(\omega)$ is defined to be zero for $\omega\geq0$ and one otherwise.
\end{namedtheorem}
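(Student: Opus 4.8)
The plan is to substitute the prescribed ansatz directly into the implementation-cost bound of Eq.~\ref{eqn:continuous_time_overall_complexity_lindblad_Appendix} and show that this particular choice of $A$ and $\hat\gamma$ forces $\norm{L}=\Theta(1)$ while leaving all of the $N$-dependence in $\norm{A}$. The linear-in-mixing-time scaling then follows because $\mathcal{L}_L$ is bilinear in $L$, which lets us trade the simulation time $T$ against $\norm{A}^2$ at fixed product, thereby collapsing the generically quadratic bound in $T$ to one that is linear in the mixing time.

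First I would evaluate the jump operator $L$ from Eq.~\ref{eqn:jumperator_L_definition} explicitly. Let $\mathcal{M}$ (size $M$) and $\mathcal{U}$ (size $N-M$) denote the marked and unmarked index sets, with energies $-\Delta$ and $0$. Since $\hat\gamma(\omega)=1$ only for $\omega<0$, the sole surviving weights $\hat\gamma(\lambda_i-\lambda_j)$ are those with $i\in\mathcal{M}$, $j\in\mathcal{U}$. Because every entry of $A=J_N/\sqrt{(N-M)M}$ equals $1/\sqrt{(N-M)M}$, this collapses to
\[
L=\frac{1}{\sqrt{(N-M)M}}\Big(\sum_{i\in\mathcal{M}}\ket{i}\Big)\Big(\sum_{j\in\mathcal{U}}\bra{j}\Big)=\ket{\mathcal{M}}\bra{\mathcal{U}},
\]
where $\ket{\mathcal{M}}$ and $\ket{\mathcal{U}}$ are the normalized uniform superpositions over the two sets. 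This operator is rank one with $\norm{L}=1$, \emph{independently of $N$}: the filter function restricts $L$ to the $\mathcal{M}\times\mathcal{U}$ block, which is exactly where the prefactor cancels the $\sqrt{M(N-M)}$ coming from the two superpositions. Since $\braket{\mathcal{U}}{\mathcal{M}}=0$, we also have $L\ket{i}=0$ for every marked $i$, so by Lemma~\ref{lma:Convergence} together with Eq.~\ref{eqn:steady_state_argument} the marked subspace is the steady state and convergence is guaranteed. I expect this explicit evaluation to be the decisive step, as everything afterward is bookkeeping.

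The second ingredient is a scaling relation. Observe that $L$ is \emph{linear} in $A$ while $\mathcal{L}_L$ is \emph{bilinear} in $L$; hence rescaling $A\mapsto cA$ sends $\mathcal{L}_L\mapsto c^2\mathcal{L}_L$, which rescales time and therefore the mixing time as $T\mapsto T/c^2$. Consequently the product $\norm{A}^2\,T$ is invariant under this rescaling and equals the mixing time evaluated at $\norm{A}=1$; call it $T_{\mathrm{mix}}$. For the proposition's normalization, a direct solve of the amplitude-damping dynamics generated by $L=\ket{\mathcal{M}}\bra{\mathcal{U}}$ — acting on the two-dimensional invariant subspace spanned by $\ket{\mathcal{M}},\ket{\mathcal{U}}$, whose only relaxation rate is $\bra{\mathcal{U}}L^\dagger L\ket{\mathcal{U}}=1$ — gives $T=\Theta(1)$, while $\norm{A}=N/\sqrt{(N-M)M}$, so $\norm{A}^2=T_{\mathrm{mix}}=N^2/\big((N-M)M\big)$.

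Finally I would substitute $\norm{L}=1$ and $T=\Theta(1)$ into Eq.~\ref{eqn:continuous_time_overall_complexity_lindblad_Appendix}. The first argument reduces to $\norm{L}^4 T^2\epsilon^{-1}=\Theta(\epsilon^{-1})$, constant in $N$, so the second argument dominates and yields $T_H=\tilde\Theta\big(\norm{A}^{2+o(1)}\epsilon^{-o(1)}\big)=\tilde\Theta\big(T_{\mathrm{mix}}^{1+o(1)}\big)$, with the Trotter order $p\to\infty$ driving $o(1)\to0$. Thus $T_H$ is linear, up to polylogarithmic factors, in the mixing time, in contrast to the generic bound $T_H=\tilde\Theta(T_{\mathrm{mix}}^2)$ recovered when the complexity is instead kept in $T$ at $\norm{A}=1$. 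The only point requiring genuine care is justifying that shifting weight into $\norm{A}$ is legitimate precisely because $\norm{L}$ does \emph{not} track $\norm{A}$ for this ansatz; this is exactly what the block structure of $L$ secured in the second step, so the argument is self-consistent.
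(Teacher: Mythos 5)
Your proposal is correct and follows essentially the same route as the paper's proof: establishing the scale invariance of $\norm{A}^2 T_{\mathrm{mix}}$ from the bilinearity of $\mathcal{L}_L$ in $L$, computing $L$ explicitly under the step filter to get the rank-one form with $\norm{L}=1$, solving the resulting two-level relaxation to pin down the mixing time, and substituting into Eq.~\ref{eqn:continuous_time_overall_complexity_lindblad_Appendix}. The only cosmetic differences are that you read off $\norm{L}=1$ directly from the outer product of unit vectors (the paper goes through the Frobenius norm and rank) and you solve the dynamics on the $\{\ket{\mathcal{M}},\ket{\mathcal{U}}\}$ subspace rather than via the collective variables $z_g$, $z_e$.
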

\begin{proof}
First, we show that the quantity $\norm{A}^2 T_{\mathrm{mix}}$ is constant. When defining ${\tilde{A} := \eta A}$, we can see that ${\tilde{\mathcal{L}} = \eta^2 \mathcal{L}}$ for any constant $\eta$. Therefore, the eigenvalues of $\mathcal{L}$ scale proportionally to $\eta^2$. Hence, we have $\tilde{T}_{\mathrm{mix}}=\eta^{-2} T_{\mathrm{mix}}$, which leads to:
\begin{equation*}
    \tnorm{\tilde{A}}^2 \tilde{T}_{\mathrm{mix}} = \norm{A}^2 T_{\mathrm{mix}} = \textrm{const}.
\end{equation*}
Next, we evaluate the mixing time for the Hamiltonian $H=-\sum_{g_i \in \mathcal{G}} \ket{g_i} \bra{g_i}$, where $\mathcal{G}$ denotes the set of all marked elements with ${\abs{G}=M}$. Let $A:=\ket{s}\bra{s}$, where ${\ket{s}=\frac{1}{\sqrt{N}}\sum_i\ket{i}}$ is the uniform superposition state. When defining the variables~${z_{g} := \sum_{g_i \in \mathcal{G}}\sum_{g_j \in \mathcal{G}} \rho_{g_{i}g_{j}}}$ and~${z_{e} := \sum_{a \notin \mathcal{G}}\sum_{b \notin \mathcal{G}} \rho_{ab}}$, we get from the LME~\ref{eqn:lindblad_equation_in_indices}:
\begin{equation*}
\begin{aligned}
    \dot{z}_{g}&=\eta^2 M z_{e} \\
    \dot{z}_{e} &= -(N-M) M \eta^2 z_{e}.
\end{aligned}
\end{equation*}
Hence, the mixing time is $T_{\mathrm{mix}}=1/((N-M) M \eta^2)$. When we choose $\eta=1/\sqrt{(N-M)M}$, we get $T_{\mathrm{mix}}=1$. Finally, we want to evaluate $\norm{L}$, where $L$ is defined via Eq.~\ref{eqn:jumperator_L_definition} as:
\begin{equation*}
\begin{aligned}
    L=\sum_{ij}\hat{\gamma}_{ij} A_{ij} \ket{i} \bra{j} = \frac{1}{\sqrt{(N-M)M}} \sum_{\substack{i\in \mathcal{G} \\ j \notin \mathcal{G}}} \ket{i}\bra{j}.
\end{aligned}
\end{equation*}
The Frobenius norm of the jump operator is ${\norm{L}_F=\sqrt{\sum_{ij}\big | L_{ij}\big | ^2}}=1$. By using the identity $\norm{L}\leq \norm{L}_F \leq \sqrt{r} \norm{L}$, where $r:=\mathrm{rank}(L)=1$, we see that $\norm{L}=1$. Given that $\norm{A}^2 T_{\mathrm{mix}}$ is constant, Equation~\ref{eqn:continuous_time_overall_complexity_lindblad_Appendix} implies a linear scaling of the overall complexity with the mixing time.
\end{proof}
As we already observed in Section~\ref{sct:Single Bit-Flip Operators}, Proposition~\ref{prop:Linear_scaling} does not hold for general approaches. Nevertheless, we believe that this result can be extended to other interesting systems. In a broader sense, if we assume that ${\int_{-\infty}^{\infty}\gamma(s)\textrm{d}s} = \mathcal{O}(1)$, then:
\begin{equation*}
    \norm{L} \leq \norm{A} \int_{-\infty}^{\infty} \abs{\gamma(s)}\textrm{d}s = \mathcal{O}\left(\norm{A}\right).
\end{equation*}
However, the ratio of $\norm{A}/\norm{L}$ can still be exponentially large in the system size $n$, as seen in the case of the Grover Hamiltonian.

\subsection{Discrete-Time Dynamics}
\label{sct:Discrete-Time Dynamics}

As already noted in Ref.~\cite{Ding2023-rv}, our objective might not necessarily be to precisely replicate the Lindblad dynamics but rather to prepare the ground state. Thus, we can tailor the algorithm to ground-state preparation by recognizing that the steady-state argument in Eq.~\ref{eqn:steady_state_argument} remains valid for $\sigma(t)$ from Eq.~\ref{eqn:sigma_lindblad_approximation_Appendix}:
\begin{equation*}
\begin{aligned}
    &\tilde{L} \left(\ket{0}_a \bra{0}_a \otimes \rho_g\right) = \begin{pmatrix}
        0 \\ L \rho_g
    \end{pmatrix} = \begin{pmatrix}
        0 \\ 0
    \end{pmatrix} \\
    \Longleftrightarrow \; &\sigma(t)[\rho_g]=\Tr_a \left( e^{-i\tilde{L}\sqrt{t}} [\ket{0}_a \bra{0}_a \otimes \rho_g] e^{i \tilde{L} \sqrt{t}}\right) = \rho_g.
\end{aligned}
\end{equation*}
This allows us to choose a larger time step and ignore the error from Eq.~\ref{eqn:approximating_dissipative_part_error}. Having shown in Eq.~\ref{eqn:integral_truncation_and_discretization_error} that the error from truncating and discretizing the integral is insignificant for the right choice of $M_{s}$ and $\mu$, we observe that the main limitation once more lies in achieving precise Trotterization:
\begin{equation*}
    \norm{\sigma(\tau) - \chi(\tau)}_1 = \mathcal{O} \left(\norm{A}^{p+2} \tau^{p/2+1} \right).
\end{equation*}
To achieve an accuracy of ${\epsilon>0}$, set:
\begin{equation*}
    \tau = \mathcal{O}\left(\norm{A}^{-2-4/p} \, T^{-2/p} \, \epsilon^{2/p}\right).
\end{equation*}
We see, that the discrete-time dynamics have the disadvantage of not following the exact Lindblad dynamics, but improve the cost of ground-state preparation to a linear dependency in $T$ and a negligible dependency in $\epsilon$ for a high Trotter order $p$:
\begin{equation*}
    T_H = \Theta\left(\frac{T}{\tau}\right) = \tilde{\Theta}\left(T^{1+o(1)} \norm{A}^{2+o(1)} \epsilon^{-o(1)}\right).
\end{equation*}

\section{Quantum Continuous-Time Short-Range Convergence}
\label{sct:Quantum Continuous-Time Short-Range Convergence}

In this section, we want to prove the convergence to the ground state for the short-range transition ansatz from Section~\ref{sct:Single Bit-Flip Operators}.

\begin{lemma}
\label{lma:short_range_convergence_proof}
    For the short-range transition ansatz with ${A=1/n  \sum_i X_{i}}$, ${\hat{\gamma}(0) = 1}$, and the Hamiltonian in Eq.~\ref{eqn:grover_hamiltonian}, the dynamics of the LME converge to the ground state, when starting in $\rho(0)=\ket{s}\bra{s}$.
\end{lemma}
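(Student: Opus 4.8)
The plan is to invoke Lemma~\ref{lma:Convergence}, but applied not to the full Hilbert space but to an invariant subspace adapted to $\ket{s}$, since for the Grover spectrum $\ket{g}\bra{g}$ is \emph{not} globally the unique steady state. First I would write $L$ explicitly. With $\lambda_g=-1$, $\lambda_a=0$ for $a\neq g$, and the short-range filter $\hat\gamma(\omega)=1$ for $\omega\le 0$ and $\hat\gamma(\omega)=0$ for $\omega>0$, Eq.~\ref{eqn:jumperator_L_definition} collapses to $L=A(I-\ket{g}\bra{g})$: all transitions survive except those leaving $g$ toward the degenerate excited manifold, which are suppressed by $\hat\gamma(1)=0$. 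Since $A=\frac1n\sum_k X_k$ only connects Hamming neighbors and every neighbor of $g$ is excited, $L\ket{g}=0$; together with $H\ket{g}=-\ket{g}$ this gives, via the first part of Lemma~\ref{lma:Convergence}, that $\ket{g}\bra{g}$ is a steady state.

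The subtlety that forces the statement to fix the input $\ket{s}$ is that $\ket{g}\bra{g}$ is not the only steady state: for even $n$ there are antisymmetric excited dark states with $L\ket{\psi}=0$ (for instance $\ket{01}-\ket{10}$ at $n=2$). The key point is that these are orthogonal to $\ket{s}$, so I would work inside the permutation-symmetric Dicke subspace $\mathcal D=\mathrm{span}\{\ket{D_0},\dots,\ket{D_n}\}$, where $\ket{D_0}=\ket{g}$ and $\ket{s}=\sum_\alpha\sqrt{\binom n\alpha/N}\,\ket{D_\alpha}\in\mathcal D$. Because $A$ is permutation symmetric it preserves $\mathcal D$, and hence so do $L$, $L^\dagger$ and $H$; a one-line check on the generator then shows $\mathcal L_L$ maps $\mathcal D$-supported density matrices to $\mathcal D$-supported ones, so $\rho(t)$ remains supported on $\mathcal D$ for all $t$.

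Inside $\mathcal D$ the operator $A$ acts as the tridiagonal Jacobi matrix $T=2S_x/n$ of a spin-$n/2$, whose off-diagonal couplings $\ket{D_\alpha}\leftrightarrow\ket{D_{\alpha\pm1}}$ are all nonzero. I would first note that a nonzero eigenvector of an irreducible Jacobi matrix cannot vanish at an endpoint: the boundary row of $T$ forces $v_0=0\Rightarrow v_1=0\Rightarrow\cdots\Rightarrow v\equiv0$. Hence every kernel vector of $T$ has nonzero $\ket{D_0}$ component, so $\ker T$ meets the excited sector $\mathrm{span}\{\ket{D_1},\dots,\ket{D_n}\}$ only in $0$, and therefore $\ker(L|_{\mathcal D})\cap\mathcal D=\mathrm{span}\{\ket{g}\}$: there is no symmetric excited dark state. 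Irreducibility of $T$ (the transition graph on $\{D_0,\dots,D_n\}$ is a connected path draining into the single sink $\ket{g}$) likewise rules out any proper $L$-invariant subspace of $\mathcal D$ other than $\mathrm{span}\{\ket{g}\}$, so the uniqueness half of Lemma~\ref{lma:Convergence}, applied within $\mathcal D$, gives that $\ket{g}\bra{g}$ is the unique steady state of the $\mathcal D$-restricted dynamics. The $\mathcal D$-supported trajectory starting from $\ket{s}\bra{s}$ then converges to it.

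The main obstacle will be the uniqueness step inside $\mathcal D$, i.e.\ excluding proper $L$-invariant subspaces there; the crucial input is the irreducibility of the tridiagonal representation of $\sum_k X_k$ on the Dicke ladder, equivalently that the unique $S_x=0$ eigenvector has nonzero weight on $\ket{D_0}$. The remaining ingredients — identifying $L=A(I-\ket{g}\bra{g})$, checking $L\ket{g}=0$, and verifying the invariance of $\mathcal D$ under $L,L^\dagger,H$ — are routine.
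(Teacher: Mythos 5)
Your proposal is correct, but it follows a genuinely different route from the paper's. The paper works in the full $2^n$-dimensional space: it expands $L\ket{\Psi}$ shell by shell in Hamming distance, explicitly exhibits the additional dark subspaces $\mathcal{H}^{\perp}$ in the kernel of $L$, checks that $\ket{s}\bra{s}$ has no overlap with $\mathcal{H}^{\perp}$ and that the evolution preserves this, and then excludes invariant subspaces in the remainder by an iterative argument on adjacent Hamming shells. You instead restrict from the outset to the permutation-symmetric Dicke ladder $\mathcal{D}$ adapted to $g$, which is manifestly invariant under $A$ and hence under $L=A(I-\ket{g}\bra{g})$, $L^{\dagger}$ and $H$, and which contains both $\ket{s}$ and $\ket{g}$; there the problem collapses to an $(n+1)$-dimensional irreducible Jacobi matrix, and both the kernel computation $\ker(L|_{\mathcal{D}})=\mathrm{span}\{\ket{g}\}$ and the exclusion of invariant subspaces orthogonal to $\ket{g}$ follow from the three-term recurrence (the ``connected path draining into the sink'' argument, which can be made fully rigorous by iterating on the minimal occupied rung). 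Your approach buys a much smaller effective space, avoids characterizing the global dark-state structure of $L$, and makes the ``trajectory stays in the subspace'' step immediate from symmetry rather than argued via partial traces; the paper's approach buys information about which \emph{other} initial states converge (anything without overlap on $\mathcal{H}^{\perp}$), not just $\ket{s}$. Two points worth making explicit in a final write-up: the identity $L=A(I-\ket{g}\bra{g})$ silently uses $A_{gg}=0$ (otherwise an extra $\hat{\gamma}(0)A_{gg}\ket{g}\bra{g}$ term survives), and, like the paper, you invoke Lemma~\ref{lma:Convergence} in the informal sense that the absence of a nontrivial invariant subspace orthogonal to the dark state suffices for convergence.
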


\begin{proof}
In the following, we will employ Lemma~\ref{lma:Convergence}. First, we determine all kernels of $L$ by applying it to a state ${\ket{\Psi}:=\sum_{\alpha=0}^n \sum_{m \in \mathcal{H}_\alpha} c_m \ket{m}} \in \mathbb{C}^N$:
\begin{align*}
    L \ket{\Psi} &= \sum_{\alpha=1}^{n-1} \sum_{j \in \mathcal{H}_\alpha} \sum_{i \in \mathcal{H}_{\alpha+1} \oplus \mathcal{H}_{\alpha-1}} c_j \ket{i} + \sum_{j \in \mathcal{H}_n} \sum_{i \in \mathcal{H}_{n-1}} c_j \ket{i} \\
    &= \sum_{i \in \mathcal{H}_0} \sum_{j \in \mathcal{H}_1} c_j \ket{i} + \sum_{i \in \mathcal{H}_1} \sum_{j \in \mathcal{H}_2} c_j \ket{i} \\
    &+ \sum_{\alpha=2}^{n-1} \sum_{i \in \mathcal{H}_\alpha} \sum_{j \in \mathcal{H}_{\alpha+1} \oplus \mathcal{H}_{\alpha-1}} c_j \ket{i} + \sum_{i \in \mathcal{H}_n} \sum_{j \in \mathcal{H}_{n-1}} c_j \ket{i}.
\end{align*}
From this, we can follow that the ground state $\ket{g}$ is a steady state of the evolution, since $L\ket{g}=0$. Nevertheless, there are other subspaces ${\mathcal{H}^{\perp} := \mathrm{span}\{\ket*{\Psi^{\perp}}\}}$ that lie in the kernel of $L$:
\begin{widetext}
\begin{equation*}
\begin{aligned}
    \mathcal{H}^{\perp} &= \mathrm{span}\Bigg\{ \sum_{i \in \mathcal{H}_0} \sum_{j \in \mathcal{H}_1} c_j \ket{i} \mid \sum_{j \in \mathcal{H}_1} c_j = 0 \Bigg\} \oplus \mathrm{span}\Bigg\{ \sum_{i \in \mathcal{H}_1} \sum_{j \in \mathcal{H}_2} c_j \ket{i} \mid \sum_{j \in \mathcal{H}_2} c_j = 0 \Bigg\} \\
    &\oplus \; \mathrm{span}\Bigg\{ \sum_{i \in \mathcal{H}_{\alpha}} \sum_{j \in \mathcal{H}_{\alpha+1} \oplus \mathcal{H}_{\alpha-1}} c_j \ket{i} \mid \sum_{j \in \mathcal{H}_{\alpha+1} \oplus \mathcal{H}_{\alpha-1}} c_j = 0, \; 1 < \alpha < n \Bigg\}
    \oplus \; \mathrm{span}\Bigg\{ \sum_{i \in \mathcal{H}_n} \sum_{j \in \mathcal{H}_{n-1}} c_j \ket{i} \mid \sum_{j \in \mathcal{H}_{n-1}} c_j = 0 \Bigg\}
\end{aligned}
\end{equation*}
\end{widetext}
Note however that when we start in $\rho(0)=\ket{s}\bra{s}$, which has no overlap with $\mathcal{H}^{\perp}$ (since $\Tr_{\mathcal{H}^\perp} \rho(0)=0$), $\rho(t)$ will stay in $\mathcal{H} \setminus \mathcal{H}^{\perp}, \; \forall t\geq 0$, where $\mathcal{H} \in \mathbb{C}^N$ denotes the Hilbert space of the system. This can be understood by realizing that:
\begin{equation*}
\begin{aligned}
    \bra*{\Psi^\perp} L \ket*{s} = 0 \; &\Longrightarrow \; \Tr_{\mathcal{H}^\perp} \left(L \rho(0) L^\dag\right)=0 \\
    \bra*{s} \ket*{\Psi^\perp} = 0 \; &\Longrightarrow \; \Tr_{\mathcal{H}^\perp} \left(L^\dag L \rho(0)\right)=0
\end{aligned}
\end{equation*}
and hence we have $\forall t\geq 0$,
\begin{equation*}
\Tr_{\mathcal{H}^\perp} \left(\mathcal{L}_L[\rho(0)]\right)=0 \; \Longrightarrow \; \Tr_{\mathcal{H}^\perp} \left(e^{\mathcal{L}_L t}[\rho(0)]\right)=0.
\end{equation*}
We can now employ Lemma~\ref{lma:Convergence} in the restricted subspace $\mathcal{H} \setminus \mathcal{H}^{\perp}$. Therefore, what remains to be shown is that there is no subspace $S \subseteq \mathcal{H} \setminus \mathcal{H}^{\perp}$ which is left invariant under $L$. We first demonstrate that $L$ maps states in the subspace ${\mathcal{H}_{\alpha}}$, namely ${\ket{\Psi_{\alpha}}:=\sum_{m \in \mathcal{H}_{\alpha}} c_m \ket{m}}$, only to the adjacent spaces $\mathcal{H}_{\alpha \pm 1}$:
\begin{equation*}
\begin{aligned}
    L \ket{\Psi_{\alpha}} &= \sum_{j \in \mathcal{H}_{\alpha}} \sum_{i \in \mathcal{H}_{\alpha+1} \oplus \mathcal{H}_{\alpha-1}} c_j  A_{ij} \ket{i} \\
    &=
    \begin{cases}
      0 & \text{if $\alpha = 0$} \\
      \sum_{j \in \mathcal{H}_{\alpha}} \sum_{i \in \mathcal{H}_{\alpha-1}} c_j \ket{i} & \text{if $\alpha = n$} \\
      \sum_{j \in \mathcal{H}_{\alpha}} \sum_{i \in \mathcal{H}_{\alpha+1} \oplus \mathcal{H}_{\alpha-1}} c_j \ket{i} & \text{otherwise}.
    \end{cases}
\end{aligned}
\end{equation*}
Next, we demonstrate that when applying $L$ to a state which has overlap with the two adjacent groups ${\mathcal{H}_{\alpha}}$ and ${\mathcal{H}_{\alpha+1}}$ ($\mathcal{H}_{\alpha}$ and $\mathcal{H}_{\alpha \pm 2}$ is equivalent), the resulting state has a non-zero overlap with the subspaces $\mathcal{H}_{\alpha + 2}$ and $\mathcal{H}_{\alpha-1}$. Therefore, define $\ket{\Psi} := \ket{\Psi_{\alpha}} + \ket{\Psi_{\alpha+1}} = \left(\sum_{m \in \mathcal{H}_{\alpha}} c_{m,\alpha} + \sum_{m \in \mathcal{H}_{\alpha+1}} c_{m,\alpha+1}\right) \ket{m}$ and apply $L$:
\begin{equation*}
\begin{aligned}
    L \ket{\Psi} &= \sum_{j \in \mathcal{H}_{\alpha}} \sum_{i \in \mathcal{H}_{\alpha+1} \oplus \mathcal{H}_{\alpha-1}} c_{j\alpha}  A_{ij} \ket{i} \\
    &+ \sum_{j \in \mathcal{H}_{\alpha+1}} \sum_{i \in \mathcal{H}_{\alpha+2} \oplus \mathcal{H}_{\alpha}} c_{j,\alpha+1}  A_{ij}\ket{i}.
\end{aligned}
\end{equation*}
Hence, for an invariant subspace to occur, the amplitudes in the manifolds of $\mathcal{H}_{\alpha+2}$ and $\mathcal{H}_{\alpha-1}$ need to be zero:
\begin{equation*}
\begin{aligned}
    \sum_{j \in \mathcal{H}_{\alpha}} \sum_{i \in \mathcal{H}_{\alpha-1}} c_{j\alpha} A_{ij} &= \alpha \sum_{i \in \mathcal{H}_{\alpha}} c_{i\alpha} \overset{!}{=} 0 \\
    \sum_{j \in \mathcal{H}_{\alpha+1}} \sum_{i \in \mathcal{H}_{\alpha+2}} c_{j,\alpha+1} A_{ij} &= (n-\alpha-1) \sum_{i \in \mathcal{H}_{\alpha+1}} c_{i,\alpha+1} \overset{!}{=} 0.
\end{aligned}
\end{equation*}
From this, we can easily follow, that for two adjacent groups $\mathcal{H}_{\alpha}$ and $\mathcal{H}_{\alpha+1}$, we only yield an invariant subspace if $\ket{\Psi} = 0$. If this was not the case, it would be possible that states in $\mathcal{H}_{\alpha}$ map only to $\mathcal{H}_{\alpha+1}$ and vice versa and thus the manifold $\mathcal{H}_{\alpha} \oplus \mathcal{H}_{\alpha+1}$ could be an invariant subspace. Repeated application of this argument will lead us to arrive at $\ket{g}$ eventually and hence this subspace is not orthogonal to $\ket{g}$. Finally, it is obvious that $\ket{g}$ is an eigenstate of  $H=0$, which is the Hamiltonian of the unitary generator $\mathcal{L}_H=0$.
\end{proof}

\section{Quantum Continuous-Time Short-Range Dynamics}
\label{sct:Quantum Continuous-Time Short-Range Dynamics}

The ODEs for ${A=\eta  \sum_i X_i}$ read:
\begin{equation}
\begin{aligned}
\label{eqn:short_range_ODEs}
    \frac{\dot{z}^{\alpha}_{\alpha'}}{\eta^2} = &(\alpha'+1)  (\alpha+1)  z^{\alpha+1}_{\alpha'+1} \\
    &+ (\alpha'+1)  (n-\alpha+1)  z^{\alpha-1}_{\alpha'+1}  (1-\delta_{\alpha,1}) \\
    &+ (\alpha + 1)  (n - \alpha' + 1)  z^{\alpha + 1}_{\alpha'-1}  (1-\delta_{\alpha',1}) \\
    &+ (n - \alpha + 1)  (n-\alpha'+1)  z_{\alpha'-1}^{\alpha-1}  (1-\delta_{\alpha,1})  (1-\delta_{\alpha',1}) \\
    &-\frac{1}{2}  (n-\alpha + 1)  (n - \alpha + 2)  z_{\alpha'}^{\alpha-2}  (1- \delta_{\alpha-2,0}) \\
    &-\frac{1}{2} (n-\alpha+1)  \alpha  z_{\alpha'}^{\alpha} \\
    &-\frac{1}{2}  (\alpha+1)  (n-\alpha)  z_{\alpha'}^{\alpha}  (1-\delta_{\alpha,0}) \\
    &-\frac{1}{2}  ((\alpha'+1)  (n-\alpha')  (1-\delta_{\alpha',0}) + (n-\alpha'+1)  \alpha')  z_{\alpha'}^{\alpha} \\
    &-\frac{1}{2}  (\alpha+1)  (\alpha+2)  z^{\alpha+2}_{\alpha'}  (1 - \delta_{\alpha,0}) \\
    &-\frac{1}{2}  (\alpha'+1)(\alpha+2)  z_{\alpha'+2}^{\alpha}  (1-\delta_{\alpha',0}) \\
    &-\frac{1}{2}  (n-\alpha'+1)(n-\alpha'+2)  z_{\alpha-2}^{\alpha}  (1-\delta_{\alpha'-2,0}).
\end{aligned}
\end{equation}
Here, we defined ${z^{\alpha}_{\alpha'}=\sum_{i \in \mathcal{H}_{\alpha}} \sum_{j \in \mathcal{H}_{\alpha'}} \rho_{ij}}$, and $\delta_{ij}$ is the Kronecker delta. The latter arises due to the asymmetry, introduced by $\hat{\gamma}$. Further, one has to set all terms ${z_{\alpha'}^{\alpha}, z_{\alpha}^{\alpha'} = 0}$ for ${\alpha, \alpha' < 0}$, or ${\alpha,\alpha' > n}$. For the numerical simulations presented in Figure~\ref{fig:short_range_scaling}, the following initial conditions were used:
\begin{equation*}
    z_\alpha^{\alpha'}(0) = \frac{\binom{n}{\alpha}  \binom{n}{\alpha'}}{N}.
\end{equation*}
\section{Discrete-Time Long-Range Approach}
\label{sct:Discrete-Time Long-Range Approach_Appendix}

\begin{proposition}
    For a coupling operator of the form ${A=\ket{s}\bra{s}}$, starting in ${\rho(0) = \ket{s}\bra{s}}$, after $m$ steps of propagating according to Eq.~\ref{eqn:sigma_lindblad_approximation}, we arrive at the density matrix $\sigma_m$ of Eq.~\ref{eqn:single_projector_discrete_time_density_matrix}.
\end{proposition}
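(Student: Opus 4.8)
The plan is to use the fact that, for this particular $A=\ket{s}\bra{s}$ and the step filter $\hat{\gamma}$, the jump operator collapses to a single rank-one term, so that the whole evolution is confined to a two-dimensional system subspace and the $m$-step map reduces to a scalar recursion. First I would compute $L$ explicitly. For the Grover Hamiltonian $H=-\ket{g}\bra{g}$ the only eigenvalue difference left nonzero by the step filter is $\lambda_g-\lambda_j=-1<0$ for $j\neq g$, while the differences among the degenerate excited states and the reverse transition $j=g\to i\neq g$ are all suppressed. Since $A_{ij}=\eta$ for $A=\eta J_N$, this yields $L=\eta\sum_{j\neq g}\ket{g}\bra{j}=\eta\sqrt{N-1}\,\ket{g}\bra{\tilde s}$, a rank-one operator, and $\zeta=\eta\sqrt{N-1}\sqrt{\tau}$ is recognized as $\norm{L}\sqrt{\tau}$.

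Next I would diagonalize the dilation $\tilde{L}$. Setting $\ket{u}:=\ket{0}_a\ket{\tilde s}$ and $\ket{v}:=\ket{1}_a\ket{g}$, a short computation gives $\tilde{L}\ket{u}=\eta\sqrt{N-1}\,\ket{v}$ and $\tilde{L}\ket{v}=\eta\sqrt{N-1}\,\ket{u}$, while $\tilde{L}$ annihilates everything orthogonal to $\mathrm{span}\{\ket{u},\ket{v}\}$ (in particular $\ket{0}_a\ket{g}$, since $\langle\tilde s\ket{g}=0$). Hence $\tilde{L}$ acts as $\eta\sqrt{N-1}\,\sigma_x$ on this two-dimensional block and as zero elsewhere, so $U:=e^{-i\tilde{L}\sqrt{\tau}}$ is the identity outside the block and the rotation $\cos\zeta\,I-i\sin\zeta\,\sigma_x$ inside it. Concretely $U\ket{0}_a\ket{g}=\ket{0}_a\ket{g}$ and $U\ket{0}_a\ket{\tilde s}=\cos\zeta\,\ket{0}_a\ket{\tilde s}-i\sin\zeta\,\ket{1}_a\ket{g}$.

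The third step is to show that the system state stays supported on $V:=\mathrm{span}\{\ket{g},\ket{\tilde s}\}$ and to read off the one-step map there. Writing $\sigma_{m-1}=a\ket{g}\bra{g}+b\ket{g}\bra{\tilde s}+b^{*}\ket{\tilde s}\bra{g}+d\ket{\tilde s}\bra{\tilde s}$, I would conjugate $\ket{0}_a\bra{0}_a\otimes\sigma_{m-1}$ by $U$ using the two images above and then apply $\Tr_a$, retaining only the ancilla-diagonal terms. This gives $\Tr_a\big(U[\ket{0}_a\bra{0}_a\otimes\ket{\tilde s}\bra{\tilde s}]U^\dag\big)=\cos^2\zeta\,\ket{\tilde s}\bra{\tilde s}+\sin^2\zeta\,\ket{g}\bra{g}$, a factor $\cos\zeta$ on the coherence, and identity on the $\ket{g}\bra{g}$ block, producing the scalar recursion $a_m=a_{m-1}+\sin^2\zeta\,d_{m-1}$, $b_m=\cos\zeta\,b_{m-1}$, $d_m=\cos^2\zeta\,d_{m-1}$. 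The base case follows from $\ket{s}=\tfrac{1}{\sqrt{N}}\ket{g}+\tilde{N}\ket{\tilde s}$, so $(a_0,b_0,d_0)=(1/N,\,\tilde{N}/\sqrt{N},\,\tilde{N}^{2})$.

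Finally I would solve the recursion by induction: $d_m=\tilde{N}^2\cos^{2m}\zeta$, $b_m=(\tilde{N}/\sqrt{N})\cos^m\zeta$, and $a_m=a_0+d_0(1-\cos^{2m}\zeta)$ after summing $\sum_{k=0}^{m-1}\cos^{2k}\zeta$. Re-expressing these coefficients in the $\{\ket{s},\ket{\tilde s},\ket{g}\}$ frame of Eq.~\ref{eqn:single_projector_discrete_time_density_matrix} — using the identity $(\cos\zeta-1)\sum_{k=0}^{m-1}\cos^k\zeta=\cos^m\zeta-1$ to match the coherence prefactor — reproduces the stated form. I expect the only real friction to be this last bookkeeping step, namely translating between the orthonormal $\{\ket{g},\ket{\tilde s}\}$ basis in which the recursion is transparent and the non-orthogonal $\ket{s}$-based presentation of the target, together with the geometric-sum simplifications; identifying the rank-one structure of $L$ and the $\sigma_x$ block of $\tilde{L}$ is the conceptual crux but computationally light.
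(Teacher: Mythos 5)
Your proposal is correct and follows essentially the same route as the paper: identify the rank-one $L=\eta\sqrt{N-1}\,\ket{g}\bra{\tilde s}$, evaluate $e^{-i\tilde L\sqrt{\tau}}$ exactly on the two-dimensional block (the paper sums the power series using $\tilde L^{2\ell}\propto\tilde G$, which is the same $\sigma_x$-block observation), and iterate the one-step map. The only difference is bookkeeping — you solve a decoupled scalar recursion in the orthonormal $\{\ket{g},\ket{\tilde s}\}$ operator basis and convert at the end, whereas the paper runs the induction directly in the non-orthogonal $\{\ket{s}\bra{s},\ket{\tilde s}\bra{s},\ket{\tilde s}\bra{\tilde s},\ket{g}\bra{g}\}$ frame of Eq.~\ref{eqn:single_projector_discrete_time_density_matrix}; your coefficients match the stated formula via $(\cos\zeta-1)\sum_{k=0}^{m-1}\cos^k\zeta=\cos^m\zeta-1$.
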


\begin{proof}
We make use of the definitions ${\ket{\tilde{s}} := \frac{1}{\sqrt{N-1}} \sum_{j \neq w} \ket{j}}$, ${\tilde{N} := \sqrt{\frac{N-1}{N}}}$, and ${\zeta := \eta \sqrt{N-1} \sqrt{\tau}}$. First, we rewrite $L$ as
\begin{equation*}
    L = \sum_{ij} \hat{\gamma}_{ij} A_{ij} \ket{i} \bra{j} = \eta \sqrt{N-1} \ket{g} \bra{\tilde{s}}.
\end{equation*}
We can then express $\tilde{L}^{\ell}$ for $\ell \in \mathbb{N}_0$ in terms of ${\tilde{L}}$ and ${\tilde{G}:= \begin{pmatrix}
    \ket{\tilde{s}} \bra{\tilde{s}} & 0 \\ 0 & \ket{g}\bra{g}
\end{pmatrix}}$:
\begin{equation*}
\begin{aligned}
    \tilde{L}^{2\ell} &= \eta^{2\ell} (N-1)^{\ell} \tilde{G}, \; \ell \in \mathbb{N} \\
    \tilde{L}^{2\ell+1} &= \eta^{2\ell} (N-1)^{\ell} \tilde{L} ,\; \ell \in \mathbb{N}_0.
\end{aligned}
\end{equation*}
Hence, we are able to separate the matrix exponential ${e^{-i\tilde{L}\sqrt{\tau}}}$ into an even and an odd part as follows:
\begin{equation}
\begin{aligned}
\label{eqn:exponential_discrete_single_projector}
    e^{-i\tilde{L}\sqrt{\tau}} &= \sum_{k=0}^\infty \frac{\left(-i \tilde{L} \sqrt{\tau}\right)^k}{k!} \\
    &= I - \tilde{G} + \underbrace{\sum_{k=0}^\infty \frac{\left(-i\right)^{2k}}{(2k)!} \zeta^{2k} \tilde{G}}_{\text{even}} + \underbrace{\sum_{k=0}^\infty \frac{\left(-i\right)^{2k+1}}{(2k+1)!} \zeta^{2k} \tilde{L}}_{\text{odd}} \\ &= I - \tilde{G} + \cos(\zeta) \tilde{G} - \frac{i}{\zeta} \sin(\zeta) \tilde{L}.
\end{aligned}
\end{equation}
${e^{i\tilde{L}\sqrt{\tau}}}$ is simply the complex conjugate of Eq.~\ref{eqn:exponential_discrete_single_projector}. Define the superoperator ${\mathcal{L}_D[\rho]:=\Tr_a \left( e^{-i\tilde{L}\sqrt{\tau}} [\ket{0}_a \bra{0}_a \otimes \rho] e^{i \tilde{L} \sqrt{\tau}}\right)}$. Then, the following can be shown by direct calculation:
\begin{equation*}
\begin{aligned}
    \mathcal{L}_D[\ket{s}\bra{s}] &= \ket{s}\bra{s} + \tilde{N} (\cos(\zeta)-1) (\ket{\tilde{s}}\bra{s} + \ket{s} \bra{\tilde{s}}) \\ &\;\;\;\;+ \tilde{N}^2 (\cos(\zeta)-1)^2 \ket{\tilde{s}}\bra{\tilde{s}} + \tilde{N}^2 \sin(\zeta) \ket{g}\bra{g} \\
    \mathcal{L}_D[\ket{g}\bra{g}] &= \ket{g}\bra{g} \\
    \mathcal{L}_D[\ket{\tilde{s}}\bra{s}] &= \cos(\zeta) \ket{\tilde{s}} \bra{s} + \tilde{N} \cos(\zeta) (\cos(\zeta)-1) \ket{\tilde{s}} \bra{\tilde{s}} \\
    &\;\;\;\;+ \tilde{N} \sin^2(\zeta) \ket{g}\bra{g} \\
    \mathcal{L}_D[\ket{\tilde{s}}\bra{\tilde{s}}] &= \cos^2(\zeta) \ket{\tilde{s}} \bra{\tilde{s}} + \sin^2(\zeta) \ket{g} \bra{g}.
\end{aligned}
\end{equation*}
Note that $\ket{g}$ is a fixed point, as expected from our reasoning in Section~\ref{sct:Continuous-Time Dynamics}. With this in hand, we proceed by proving the formula \ref{eqn:single_projector_discrete_time_density_matrix} by induction. For $m=0$, we obviously have ${\sigma_0 = \ket{s} \bra{s}}$. Then, we perform the step ${m \rightarrow m+1}$:
\begin{widetext}
\begin{equation*}
\begin{aligned}
    \sigma_{m+1}(\tau) = \mathcal{L}_D [\sigma_m] = &\ket{s}\bra{s} + \tilde{N} (\cos(\zeta)-1) \left(\sum_{k=0}^{m-1} \cos^k(\zeta)\right) (1+\cos(\zeta)) (\ket{\tilde{s}}\bra{s} + \ket{s}\bra{\tilde{s}}) \\
    &+ \tilde{N}^2 [(\cos(\zeta)-1)^2-2\cos(\zeta)(\cos(\zeta)-1)(1-\cos^m(\zeta)) \\
    &+ (1-2\cos^m(\zeta)+\cos^{2m}(\zeta)) \cos^2(\zeta)] \ket{\tilde{s}} \bra{\tilde{s}} \\
    &+ \tilde{N}^2 [\sin^2(\zeta) \cos^{2m}(\zeta)+1-\cos^{2m}(\zeta)] \ket{g}\bra{g}.
\end{aligned}
\end{equation*}
\end{widetext}
Here, we made use of the formula for the truncated geometric series:
\begin{equation*}
    \sum_{k=0}^{m-1} \cos^k(\zeta) = \frac{1-\cos^m(\zeta)}{1-\cos(\zeta)}.
\end{equation*}
Note that this formula is only valid for $\cos(\zeta) \neq 1$. However, we assume that $\zeta \neq 2\pi l$ for $l \in \mathbb{Z}$. For $\cos(\zeta)=1$, the sum yields $m$. When we use ${\sin^2(x)=1-\cos^2(x)}$ for the ${\ket{g}\bra{g}}$ term, this reduces to ${\sigma_{m+1}}$ in Eq.~\ref{eqn:single_projector_discrete_time_density_matrix} which completes the proof.
\end{proof}

\section{Influence of the Hamiltonian}
\label{sct:Influence of the Hamiltonian}

Lastly, we want to swiftly analyze the influence of the Hamiltonian evolution via $\mathcal{L}_H$ in Eq.~\ref{eqn:original_lindblad_equation}. We partition the Fock-Liouville space into $z_g := \rho_{gg}$, ${z_{eg}:=\sum_{a\neq g} \rho_{ag}}$, ${z_{ge}:=\sum_{a\neq g} \rho_{ga}}$ and ${z_e := \sum_{a \neq g}\sum_{b \neq g} \rho_{ab}}$. For ${H = -\ket{g} \bra{g}}$, the Von-Neumann Equation reads:
\begin{equation*}
    \dot{\rho}_{ab}=-i \bra{a} [H, \rho] \ket{b} = i (\delta_{ga}  \rho_{gb} - \delta_{bg}  \rho_{ag}).
\end{equation*}
This leads to:
\begin{equation*}
    \dot{z}_g = 0, \,\, \dot{z}_e = 0, \,\, \dot{z}_{eg} = -i \; z_{eg}, \,\, \textrm{and} \,\,
    \dot{z}_{ge} = i \; z_{ge}.
\end{equation*}
From this we can follow that $z_g$ and $z_e$ are not altered when considering the combined unitary-dissipative evolution via ${\mathcal{L} = \mathcal{L}_H + \mathcal{L}_L}$, and thus the populations $\mu_a := \rho_{aa}$ are not affected.

As mentioned earlier, the filter function is usually chosen imperfectly for a realistic implementation, such that ${\hat{\gamma}(0) \neq 0}$, which leads to unwanted effects in, e.g., the long-range approach. The sharper the function, the higher the cost of truncating the integral, as we expect the full width at half maximum (FWHM) of the Fourier transformed function $\gamma(s)$ to be inversely proportional to the original FWHM for the step function $\hat{\gamma}(\omega)$. Numerically, we observe that the algorithm converges to the ground state for an imperfect filter function, fulfilling ${\gamma(0) \neq 0}$, only when incorporating the Hamiltonian evolution $\mathcal{L}_H$. We analyze this behavior by employing a system of ${n=1}$ qubit. We work in the basis of the Hamiltonian and choose $L$ as:
\begin{equation*}
L=
    \begin{pmatrix}
        \epsilon & 1 \\
        \phi & \epsilon
    \end{pmatrix}.
\end{equation*}
This implies the presence of slight upward transitions via $\epsilon$ and $\phi$, caused by the imperfections in the filter function. When solving the Lindblad equation and deriving the stationary state without the Hamiltonian contributions, we get: \\
\begin{widetext}
\begin{equation*}
\begin{aligned}
\rho_{\mathrm{stat}}=
    \begin{pmatrix}
        1 & 0 \\
        0 & 0
    \end{pmatrix}
&+ \epsilon
    \begin{pmatrix}
        0 & -\left(\frac{1}{1+4i} + \phi \frac{1}{1-4i}\right) \\
        -\left(\frac{1}{1-4i} + \phi \frac{1}{1+4i}\right) & 0
    \end{pmatrix}
- \left(\frac{\epsilon^2}{1+16} + \phi^2\right) Z + \mathcal{O}\left(\epsilon^2 + \phi^2\right).
\end{aligned}
\end{equation*}
\end{widetext}
Thus, the stationary state is not exactly the ground state, whereas when incorporating $\mathcal{L}_H$:
\begin{equation*}
\rho_{\mathrm{stat}}=
    \begin{pmatrix}
        1 & 0 \\
        0 & 0
    \end{pmatrix}.
\end{equation*}

\end{sloppypar}

\bibliography{main}

\end{document}